\newtheorem{Reduction Rule}{Reduction Rule}
\newcommand{\stpath}{{$s$-$t$ path}\xspace}
\newcommand{\lstpair}{{local $s$-$t$ pair}\xspace}
\newcommand{\lstpairs}{{local $s$-$t$ pairs}\xspace}
\newcommand{\stpaths}{{$s$-$t$ paths}\xspace}
\newcommand{\dcm}{{dual connected modulator}\xspace}
\newcommand{\tp}{{\sc Tracking Paths}\xspace}
\newcommand{\tsp}{{\sc Tracking Shortest Paths}\xspace}
\newcommand{\Oh}{\mathcal{O}}
\newcommand{\defproblem}[3]{
  \vspace{1mm}
\noindent\fbox{
  \begin{minipage}{0.96\textwidth}
  \begin{tabular*}{\textwidth}{@{\extracolsep{\fill}}lr} #1 \\ \end{tabular*}
  {\bf{Input:}} #2  \\
  {\bf{Question:}} #3
  \end{minipage}
  }
  \vspace{1mm}
}
\begin{document}

\title{Structural Parameterizations of Tracking Paths Problem}

\author{Pratibha Choudhary \inst{1} \and Venkatesh Raman \inst{2}}

\institute{Indian Institute of Technology Jodhpur, Jodhpur, India.\\
\email{pratibhac247@gmail.com}
\and
Institute of Mathematical Sciences, HBNI, Chennai, India.\\
\email{vraman@imsc.res.in}
}

\authorrunning{P. Choudhary and V. Raman}
\maketitle              
\begin{abstract}
Given a graph $G$ with source and destination vertices $s,t\in V(G)$ respectively, \textsc{Tracking Paths} asks for a minimum set of vertices $T\subseteq V(G)$, such that the sequence of vertices encountered in each simple path from $s$ to $t$ is unique. The problem was proven \textsc{NP}-hard~\cite{tr-j} and was found to admit a quadratic kernel when parameterized by the size of the desired solution~\cite{quadratic}. Following recent trends, for the first time, we study \textsc{Tracking Paths} with respect to structural parameters of the input graph, parameters that measure how far the input graph is, from an easy instance. We prove that \textsc{Tracking Paths} admits fixed-parameter tractable (\textsc{FPT}) algorithms when parameterized by the size of vertex cover, and the size of cluster vertex deletion set for the input graph.

\keywords{Tracking Paths \and structural parameterization \and vertex cover \and cluster vertex deletion set \and undirected graphs.}
\end{abstract}

\section{Introduction}
Graph theory plays a fundamental role in modeling many real world problems related to (but not limited to) road networks, traffic monitoring, world wide web, social networks and circuit design. One of the graph theoretic problems studied in recent years is \tp: Given a graph, find a set of vertices that can help uniquely distinguish all simple paths between a given source and destination in the input graph. The problem finds applications in secure facility object tracking, tracing data packets in network, identifying source of fake news on social media, and tracking objects in wireless sensor networks. 

More formally, let $\overrightarrow{V}(P)$ be the sequence of vertices in a path $P$. A \textit{tracking set} for a graph $G$ with source $s$ and destination $t$ is a subset $T$ of vertices such that for any two distinct \stpaths $P_1$ and $P_2$, $\overrightarrow{V}(P_1) \neq \overrightarrow{V}(P_2)$, and the \tp problem is defined as follows:

\defproblem{\tp $(G,s,t)$}{An undirected  graph $G=(V,E)$ with terminal vertices $s$ and $t$.}
{Find a minimum cardinality tracking set $T$ for $G$.}
\medskip

The problem was first studied by Banik et al.~\cite{ciac17}, where the problem was restricted to distinguishing all {\it shortest} \stpaths in a graph. The authors proved the problem \textsc{NP}-hard and \textsc{APX}-hard, and gave a $2$-approximation algorithm for \tsp in planar graphs. 

Parameterization of a problem involves associating the problem with an integer $k$. A parameterized problem is said to admit a \textit{fixed-parameter tractable}(\textsc{FPT}) algorithm if there exists an algorithm with running time of the type $f(k).n^{\mathcal{O}(1)}$, where $f$ is a computable function, $k$ is the parameter and $n$ is the input size.
\tsp was proven to be \textsc{FPT} when parameterized by the size of tracking set~\cite{caldam18}. Bilò et al.~\cite{guido-cubic} gave an FPT algorithm for the case when there are multiple sources and destinations, the parameter being the maximum number of vertices equidistant from the source (or destination).

\tp (not just tracking shortest paths) was proven to be \textsc{NP}-hard for general graphs~\cite{tr-j}. Note that from the definition, it is not even clear how to verify if a subset of vertices forms a tracking set, as there can be exponentially many $s$-$t$ paths. Through an equivalent characterization, a polynomial time algorithm was shown~\cite{tr-j} for this task, thus proving the problem \textsc{NP}.

\begin{theorem}~\cite{tr-j}
\label{theorem:np}
\tp belongs to \textsc{NP}, i.e. for a graph $G$ and a set of vertices $T\subseteq V(G)$, there exists a polynomial time algorithm to verify if $T$ is a tracking set for $G$.
\end{theorem}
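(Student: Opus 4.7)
The plan is to establish a polynomial-time-checkable local characterization of tracking sets, since direct enumeration of \stpaths is infeasible. The intuition is that any failure of the tracking property can be localized: if two \stpaths have identical tracker sequences but differ as vertex sequences, then the first vertex at which they diverge and the first vertex at which they subsequently reconverge delimit a compact witness.

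I would first preprocess $G$ in polynomial time by removing every vertex and edge that lies on no \stpath, detected via reachability from $s$ and reverse reachability to $t$; such elements are irrelevant to any tracker sequence. Call the reduced graph $G'$.

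The key claim is: $T$ is a tracking set for $(G,s,t)$ if and only if there do not exist distinct vertices $u,v \in V(G')$, two internally vertex-disjoint $u$-$v$ paths $P_1, P_2$ whose internal vertices avoid $T$, and an $s$-$u$ path $\alpha$ together with a $v$-$t$ path $\beta$ such that $\alpha, P_1, P_2, \beta$ are pairwise internally vertex-disjoint. For the forward direction, given two \stpaths $\pi_1 \neq \pi_2$ with matching tracker sequences, I would let $u$ be the last vertex of agreement before their first divergence and $v$ be the first subsequent common vertex; the two sub-paths of $\pi_1, \pi_2$ between $u$ and $v$ are internally vertex-disjoint, and since their tracker subsequences must coincide while their internal vertex sets are disjoint, neither sub-path can contain an internal tracker. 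The shared prefix of $\pi_1, \pi_2$ up to $u$ and the shared suffix from $v$ supply $\alpha$ and $\beta$. The backward direction is immediate by concatenation.

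To verify the condition in polynomial time, iterate over all $\mathcal{O}(n^2)$ ordered pairs $(u, v)$ and, for each, test the existence of the required four internally disjoint pieces in $G' - (T \setminus \{u, v\})$ using a vertex-capacitated max-flow computation after splitting $u$ and $v$ into in- and out-copies. The main obstacle I anticipate is the extension requirement: the existence of two internally vertex-disjoint tracker-free $u$-$v$ paths alone does not yield a witness, because the required $s$-$u$ and $v$-$t$ extensions must be internally disjoint from the paths' interiors; encoding this as a tractable single-commodity flow instance (rather than an intractable multi-commodity problem) requires careful design of the auxiliary network, likely by splitting $u,v$ into ordered waypoints and routing a flow of the appropriate value through them in sequence.
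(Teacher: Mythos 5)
Your local characterization is the right one and matches the known result from \cite{tr-j}: $T$ fails to be a tracking set exactly when there is a cycle $C = P_1 \cup P_2$ with no tracker on $V(C)\setminus\{u,v\}$ for which $u,v$ form a local $s$-$t$ pair (i.e.\ admit internally disjoint $s$-$u$ and $v$-$t$ extensions avoiding $V(C)\setminus\{u,v\}$). So the high-level plan is sound. However, there are two problems with the execution.

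First, a repairable slip in the forward direction of the equivalence. After the first reconvergence point $v$, the two paths $\pi_1,\pi_2$ need not share a suffix, so ``the shared suffix from $v$'' does not exist in general. Worse, taking $\beta = \pi_1[v,t]$ can fail outright: with $\pi_1 = s,u,a,v,b,t$ and $\pi_2 = s,u,b,v,c,t$, you get $P_1 = u\cdot a\cdot v$, $P_2 = u\cdot b\cdot v$, and $\pi_1[v,t]=v\cdot b\cdot t$ runs through $b$, an internal vertex of $P_2$. The fix is to take $\beta = \pi_2[v,t]$: by the choice of $v$ as the first vertex of $\pi_1$ after $u$ that reappears on $\pi_2$ after $u$, no internal vertex of $P_1$ can occur on $\pi_2$ after $u$, so $\pi_2[v,t]$ avoids $P_1$'s interior, while avoidance of $\alpha$, $P_2$ follows from simplicity of $\pi_2$. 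This is a local patch, not a fundamental flaw.

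Second, and more serious, the polynomial-time verification is not actually established. The four pieces $\alpha$, $\beta$, $P_1$, $P_2$ have four distinct terminal pairs, so this is an instance of the $k$-vertex-disjoint-paths problem, not a single-commodity flow problem; a max-flow computation (even after vertex-splitting $u,v$) lets the units of flow cross and reassemble and cannot enforce that you get two genuine $u$-$v$ paths together with an $s$-$u$ and a $v$-$t$ path. The right tool is the polynomial algorithm for disjoint paths with a fixed number of terminals (the paper itself invokes \cite{KAWARABAYASHI2012424} for exactly this purpose when testing whether a pair is a local $s$-$t$ pair). But even that does not directly apply here, because your four paths carry heterogeneous constraints: $P_1,P_2$ must avoid $T\setminus\{u,v\}$ while $\alpha,\beta$ are allowed to pass through trackers. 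Running disjoint paths in $G-(T\setminus\{u,v,s,t\})$ would over-constrain $\alpha,\beta$; running it in $G$ would under-constrain $P_1,P_2$; and there can be exponentially many candidate cycles $P_1\cup P_2$ to fix before testing the $\alpha,\beta$ extensions. You flag this as ``the main obstacle,'' but the flow-network sketch you propose does not resolve it, and resolving it is precisely the content of the theorem. As it stands, this is a genuine gap: the characterization is right, but the claimed polynomial-time check is not justified.
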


The problem was shown to be \textsc{FPT} when parameterized by the size of tracking set, by showing the existence of a polynomial kernel~\cite{tr-j,quadratic}. A kernel for a parameterized problem is an equivalent instance of the given problem, whose size (of the reduced new instance) is bounded by a function of just the parameter. Kernelization (the process of deriving a kernel) is usually achieved through \textit{Reduction Rules} which are preprocessing operations. A reduction rule is said to be safe if the new instance is equivalent to the original one, i.e. the original instance is a YES instance if and only if the new one is a YES instance. 
Generalized combinatorial versions of \tsp have been studied in~\cite{caldam18} and~\cite{tr1-j}. Eppstein et al. studied \tp for planar graphs~\cite{ep-planar}. They showed the problem \textsc{NP}-hard and gave a $4$-approximation algorithm. They also gave a linear time algorithm for bounded clique-width graphs. Recently we gave polynomial time algorithms for some restricted cases of \tp~\cite{iwoca}.

For a parameterized problem, although output size is a natural parameter, recent years have seen increasing attention on parameters related to structure of the input~\cite{fundamentals-pc,para-ecology,graph-coloring,dipt-fvs}.
So far, parameterized analysis of \tp has been done only with respect to the output size. In this paper, we study \tp parameterized by the size of vertex cover and the size of cluster vertex deletion set. For a graph $G=(V,E)$, a \textit{vertex cover} is a set of vertices that covers all edges, i.e. the union of these vertices includes at least one endpoint of each edge in $E$. Removal of a vertex cover leaves the graph edgeless. For $G$, \textit{cluster vertex deletion set} is the set of vertices whose removal converts $G$ into a cluster graph: a graph whose each component is a clique.

Edgeless graphs do not need any trackers (since they lack \stpaths) and a tracking set can be found in polynomial time for cluster graphs (proven later in the paper). Hence, it is an interesting question to analyze whether there exists an \textsc{FPT} algorithm to solve \tp for graphs that are $k$ vertices away from an edgeless graph or a cluster graph. Usually, the quest is to look for the smallest possible parameter for which the problem at hand is fixed-parameter tractable. In general, the size of a vertex cover can be both larger or smaller than the size a tracking set for a graph. A graph with long paths of degree two vertices can have a vertex cover larger than the size of a tracking set.
While a denser graph can have a tracking set bigger than the size of a vertex cover. See Figure~\ref{fig:vc-ts}. Here the circled vertices represent a vertex cover. However, all vertices except $s,t$ need to be part of a tracking set.

\begin{figure}[ht]
\centering
\includegraphics[scale=0.35]{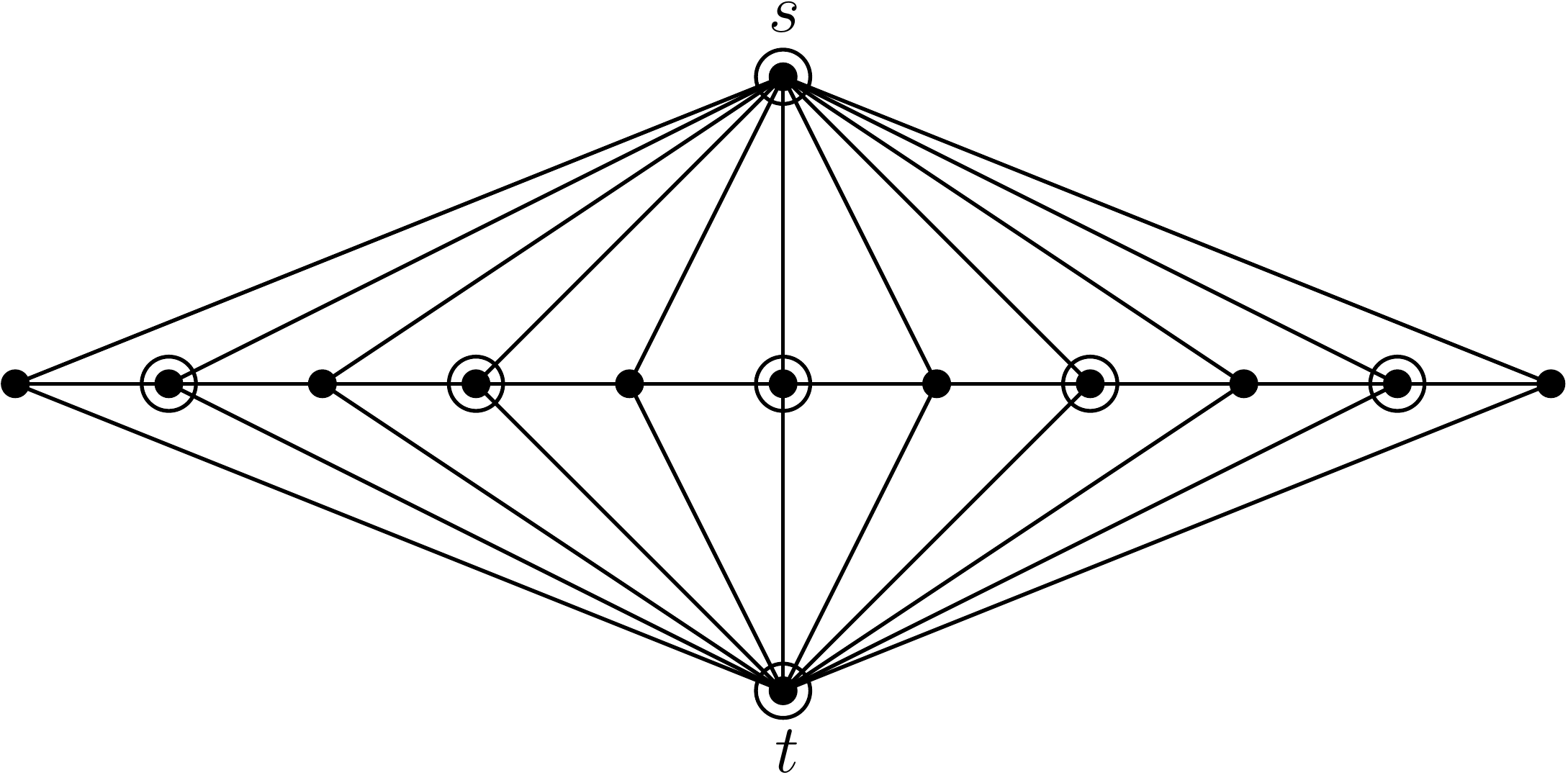} 
\caption{Graph with tracking set larger than a vertex cover (circled vertices)} 
\label{fig:vc-ts}
\end{figure}

\textbf{Our Approach.}
The usual challenge with structural parameterization is that the parameter does not drop by rules that utilize properties of the output.
We start by applying some known preprocessing rules and then use some structural properties to mark vertices as trackers that definitely need to belong to any tracking set. 
Then we bound the number of vertices that are left unmarked as a function of the parameter. Finally we try all subsets of the unmarked vertices to find which among them can be trackers.

To design FPT algorithms for the two parameterizations we consider in this paper, we first define an intermediate parameter, and that is the size of what we call a \textit{Dual Connected Modulator}. For a graph $G=(V,E)$, a set of vertices $S\subseteq V$ is a \textbf{dual connected modulator} (DCM) if every vertex in $V\setminus S$ has at least two neighbours in $S$ and has an additional property $\Pi$. For parameterization by the size of vertex cover or cluster vertex deletion set, it suffices that $\Pi$ is a disjoint union of cliques. 

We will first give an FPT algorithm for \tp parameterized by the size of a \dcm. Then we show how this algorithm can be used to give FPT algorithms for \tp parameterized by the size of vertex cover and cluster vertex deletion set.

\section{Notations and Definitions}
Throughout the paper, we assume graphs to be simple (no self loops or multi-edges).
We assume that the input graph contains a unique source $s$ and a unique destination $t$ ($s$ and $t$ are known), and we aim to find a tracking set that can distinguish all simple paths between $s$ and $t$. Here $s$ and $t$ are also referred as the terminal vertices. If $a,b\in V$, then unless otherwise stated, $\{a,b\}$ represents the set of vertices $a,b$, and $(a,b)$ represents an edge between $a$ and $b$. For a vertex $v\in V$, the \textit{neighbourhood} of $v$ is denoted by $N(v)=\{x \mid (x,v)\in E\}$. \textit{Degree} of a vertex $v$ is denoted by $deg(v)=|N(v)|$. For set of vertices $V'$, $G(V')$ denotes the graph induced by vertices in $V'$.
For a subgraph $G'$ of $G$, $V(G')$ represents the vertex set of $G'$ and $E(G')$ represents those edges whose both endpoints belong to $V(G')$. We use $G'\subseteq G$ to denote that $G'$ is a subgraph of $G$. For a vertex $v\in V$ and a subgraph $G'$, $N_{G'}(v)=N(v)\cap V(G')$ and $deg_{G'}(v)=|N(v)\cap V(G')|$. For a subset of vertices $V'\subseteq V$ we use $N(V')$ to denote $\bigcup_{v\in V'} N(v)$. For a graph $G$ and a set of vertices $S\subseteq V(G)$, $G-S$ denotes the subgraph induced by the vertex set $V(G)\setminus V(S)$. For $A,B\subseteq V(G)$, $A\uplus B$ denotes that $A$ and $B$ are vertex disjoint partitions of graph $G$.
Let $P_1$ be a path between vertices $a$ and $b$, and $P_2$ be a path between vertices $b$ and $c$, such that $V(P_1)\cap V(P_2)=\{b\}$. By $P_1 . P_2$, we denote the path between $a$ and $c$, formed by concatenating paths $P_1$ and $P_2$ at $b$. Two paths $P_1$ and $P_2$ are said to be \textit{vertex disjoint} if their vertex sets do not intersect except possibly at the end points, i.e. $V(P_1)\cap V(P_2) \subseteq \{a,b\}$, where $a$ and $b$ are the starting and end points of the paths.
For details on parameterized complexity please refer to~\cite{book,fundamentals-pc,flum-grohe}.

\section{Parameterization by Dual Connected Modulator}
\label{sec:dcm}

In this section, we give an FPT algorithm for \tp parameterized by the size of a \textit{Dual Connected Modulator}. Recall that for a graph $G$, a subset of vertices $S\subseteq V(G)$ is a DCM if every vertex in $V(G)\setminus S$ has at least two neighbours in $S$, and has an additional property $\Pi$.

\defproblem{\tp/DCM $(G,s,t,S,k)$}{An undirected  graph $G=(V,E)$ with terminal vertices $s$ and $t$, and a dual connected modulator $S\subseteq V(G)$ for $G$, such that $|S|=k$.}
{Find a minimum cardinality tracking set  $T$ for $G$.}
\medskip

The main idea of the algorithm is to first guess how $S$ intersects with a tracking set $T$ in $G$, and then for each such guess, analyze the graph structures across the partition $S\uplus (G-S)$ and mark as many vertices as possible, as trackers. In the process, we give an upper bound for the number of vertices left unmarked in $G-S$. Finally, we consider all possible subsets of unmarked vertices in $G-S$ as trackers, and together with the set of already marked vertices, we verify if they form a tracking set for the graph $G$ using Theorem~\ref{theorem:np}. Initially none of the vertices in $V$ are marked as trackers. We start by recalling some preprocessing rules from previous work. 

\begin{Reduction Rule}\cite{tr-j}
\label{red:stpath}
If there exists a vertex or an edge that does not participate in any \stpath then delete it.
\end{Reduction Rule}

\begin{Reduction Rule}\cite{quadratic} (Rephrased)
\label{red:no-deg-one}
If $V\setminus\{s,t\}=\emptyset$, then return an $\emptyset$ as a solution. Else, if degree of $s$ (or $t$) is $1$ and $N(s)\neq t$ ($N(t)\neq s$), then delete $s$($t$), and label the vertex adjacent to it  as $s$($t$). 
\end{Reduction Rule}

\begin{Reduction Rule}\cite{ep-planar}
\label{red:degree-two}
Let $u,v\in V(G)$ such that $deg(u)=deg(v)=2$, $N(v)=\{u,w\}$, then delete $v$ and introduce an edge between $u$ and $w$.
\end{Reduction Rule}

We apply above rules repeatedly as long as they are applicable. We refer to the graph obtained after applying the above rules as a {\it reduced graph}.
Note that in the reduced graph, there is no vertex with degree less than or equal to $1$, each vertex and edge participates in an \stpath and there are no long degree $2$ paths (paths with consequent degree $2$ vertices).
Throughout the paper, after application of each reduction rule, we retain the notations of $G,S$ and $k$ to refer to the graph, modulator, and size of the modulator. For now, we assume that the application of above reduction rules does not destroy any properties of the modulator. Later, while analyzing specific graph parameters, we shall tweak the rules in order to maintain the modulator properties.

\begin{figure}[ht]
\centering
\includegraphics[scale=0.3]{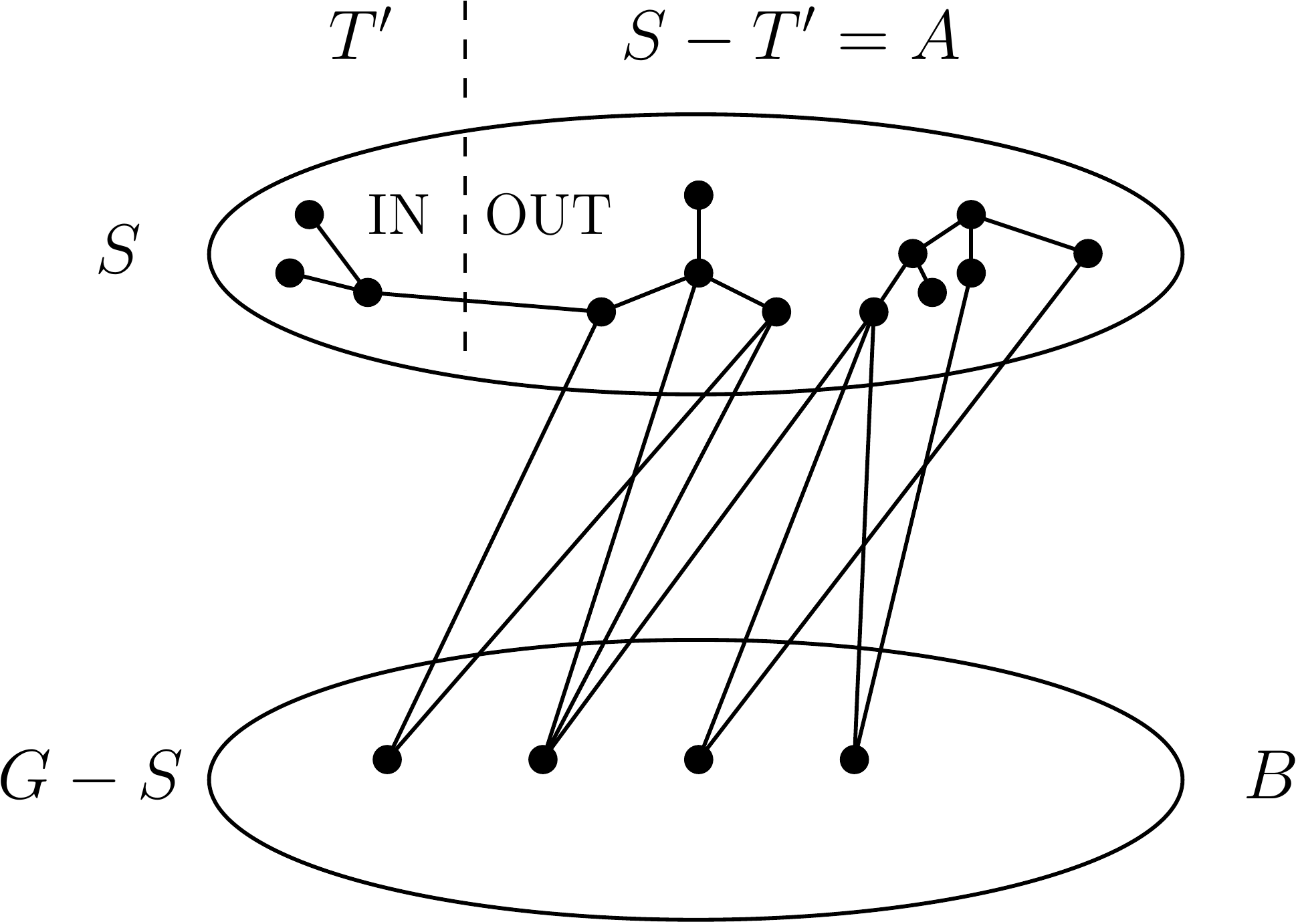} 
\caption{Disjoint tracking set problem} 
\label{fig:disjoint-vc}
\end{figure}

Let $T$ be a minimum tracking set that will be output by the algorithm, and $T'= T\cap S$ be the subset of $T$ that belongs to $S$. We attempt to guess the vertices that belong to $T'$, and towards this we simply consider all possible subsets of $S$. Hence, there are $2^k$ possible choices for $T'$. For each guess $T'$, observe that the vertices in $S\setminus T'$ cannot belong to $T$. Thus we need to find a tracking set for $G$, that is disjoint from $S\setminus T'$. We refer to this problem as \textit{disjoint tracking set}. See Figure~\ref{fig:disjoint-vc}. $S$ represents a DCM for $G$ that is received as a part of the input and $G-S$ represents the graph formed after removal of $S$ from $G$. Recall that each vertex in $G-S$ has two neighbours in $S$.
First we rule out those guesses for $T'$ that can be easily discarded. 
It is known from~\cite{tr-j} that every cycle in the graph must contain a tracker. This implies that graph induced by $S\setminus T'$ cannot contain a cycle. 
Thus, we have the following reduction rule.

\begin{Reduction Rule}
\label{red:no-cycle-in-X}
If the graph induced by $S\setminus T'$ is not a forest, reject the current guess for $T'$.
\end{Reduction Rule}


Now we define \textit{local source} and \textit{local destination}.
For a subgraph $G'\subseteq G$, and vertices $u,v\in V(G')$, $u$ is a local source and $v$ is a local destination if
\begin{enumerate}
\item there exists a path in $G$ from $s$ to $u$, say $P_{su}$,
\item there exists a path from $v$ to $t$, say $P_{vt}$,
\item $V(P_{su})\cap V(P_{vt})=\emptyset$, and
\item $V(P_{su})\cap V(G')=\{u\}$ and $V(P_{vt})\cap V(G')=\{v\}$.
\end{enumerate}

For a subgraph $G'$, we can check if a pair of vertices $a,b\in V(G')$ forms a local source-destination pair if there exists disjoint paths from $s$ to $a$ and $b$ to $t$ in the graph $G\setminus G'\cup\{a,b\}$, in quadratic time using the disjoint path algorithm from~\cite{KAWARABAYASHI2012424}. The concept of local source-destination pair has been used to obtain efficient algorithms for \tp (see~\cite{tr-j,ep-planar,iwoca}). If $u,v$ form a local source-destination pair for a subgraph $G'$, we refer to them as a local $s$-$t$ pair.
Next we recall the following lemmas (rephrased) from~\cite{tr-j}.

\begin{lemma}
\label{lemma:local-s-t}
In a reduced graph $G$, any induced subgraph $G'$ comprising of at least one edge contains a local source and a local destination.
\end{lemma}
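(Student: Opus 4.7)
The plan is to exploit Reduction Rule~\ref{red:stpath}, which guarantees that in a reduced graph every edge lies on at least one \stpath. Since $G'$ has at least one edge, I would pick an arbitrary edge $(a,b)\in E(G')$ and take any \stpath $P$ in $G$ that contains $(a,b)$. Such a $P$ exists because $(a,b)$ is an edge of the reduced graph $G$, and $P$ already meets $V(G')$ in at least the two distinct vertices $a$ and $b$.

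Next I would define the candidate local source and destination from the extremal positions of $V(G')$ along $P$: let $u$ be the first vertex of $V(G')$ encountered when traversing $P$ from $s$, and let $v$ be the last vertex of $V(G')$ encountered before reaching $t$. Because $P$ visits at least two distinct vertices of $V(G')$ (namely $a$ and $b$), we get $u\neq v$ and $u$ strictly precedes $v$ on $P$. Set $P_{su}$ to be the subpath of $P$ from $s$ to $u$, and $P_{vt}$ to be the subpath of $P$ from $v$ to $t$.

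Then I would verify the four conditions from the definition of a \lstpair. Conditions~1 and~2 are immediate from construction. For condition~3, since $u$ precedes $v$ along the simple path $P$, the subpaths $P_{su}$ and $P_{vt}$ use disjoint portions of $P$ and therefore satisfy $V(P_{su})\cap V(P_{vt})=\emptyset$. For condition~4, the extremal choice of $u$ as the first vertex of $V(G')$ on $P$ ensures $V(P_{su})\cap V(G')=\{u\}$, and symmetrically $V(P_{vt})\cap V(G')=\{v\}$.

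I do not anticipate a substantive obstacle, as the lemma is essentially a structural consequence of Reduction Rule~\ref{red:stpath}. The only mild subtlety is the degenerate case in which $s\in V(G')$ or $t\in V(G')$: in that situation the corresponding subpath collapses to a single vertex, with $u=s$ or $v=t$, but the verification of conditions~1--4 still goes through unchanged. Hence the argument produces a valid local source-destination pair for $G'$.
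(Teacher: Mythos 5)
Your argument is correct, and it handles the degenerate cases ($s$ or $t$ in $V(G')$) appropriately. Note, however, that this paper does not prove Lemma~\ref{lemma:local-s-t} at all --- it is recalled (rephrased) from~\cite{tr-j} and used as a black box --- so there is no in-paper proof to compare against. That said, your argument is the standard one for this fact: in a reduced graph every edge lies on some \stpath by Reduction Rule~\ref{red:stpath}, so an \stpath $P$ through an edge of $G'$ meets $V(G')$ in at least two vertices, and taking the first and last such vertices along $P$ (with the corresponding prefix and suffix of $P$) yields a pair satisfying conditions~1--4, with disjointness following from the simplicity of $P$. This is a clean, self-contained proof of the cited lemma.
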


\begin{lemma}
\label{lemma:subgraph}
In a subgraph $G'\subseteq G$, if all paths between a \lstpair cannot be tracked with at most $x$ trackers, then $G$ cannot be tracked with at most $x$ trackers.
\end{lemma}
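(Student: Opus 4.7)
I will prove the contrapositive: if $G$ admits a tracking set $T$ of size at most $x$, then for every subgraph $G' \subseteq G$ and every local $s$-$t$ pair $(u,v)$ in $G'$, all $u$-$v$ paths in $G'$ can be tracked using at most $x$ vertices. The candidate tracking set for the local subproblem will simply be the restriction $T \cap V(G')$, whose size is clearly at most $|T| \leq x$.

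The plan is to use the external paths guaranteed by the definition of a local source-destination pair to lift any pair of distinct $u$-$v$ paths inside $G'$ to a pair of distinct $s$-$t$ paths in $G$. Concretely, fix the vertex-disjoint paths $P_{su}$ from $s$ to $u$ and $P_{vt}$ from $v$ to $t$ given by the local $s$-$t$ pair definition, satisfying $V(P_{su}) \cap V(G') = \{u\}$ and $V(P_{vt}) \cap V(G') = \{v\}$. For any two distinct $u$-$v$ paths $Q_1, Q_2$ inside $G'$, form the concatenations $P_{su}.Q_i.P_{vt}$ for $i = 1,2$. The disjointness conditions ensure that these concatenations are legitimate simple $s$-$t$ paths in $G$: $P_{su}$ meets $Q_i$ only at $u$, $P_{vt}$ meets $Q_i$ only at $v$, and $P_{su}, P_{vt}$ are vertex-disjoint by the definition of local $s$-$t$ pair.

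Now, since $T$ is a tracking set for $G$, it must distinguish these two lifted $s$-$t$ paths. But the prefix $P_{su}$ and the suffix $P_{vt}$ are identical across both lifts, so the sequences $\overrightarrow{V}(P_{su}.Q_1.P_{vt})$ and $\overrightarrow{V}(P_{su}.Q_2.P_{vt})$ agree on the portions contributed by $T \setminus V(G')$. Hence the difference in the recorded tracker sequences must arise purely from vertices of $T$ lying inside $V(G')$, i.e., $T \cap V(G')$ must already distinguish $Q_1$ from $Q_2$.

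Since this argument applies to every pair $Q_1 \neq Q_2$ of $u$-$v$ paths in $G'$, the set $T \cap V(G')$ is a tracking set for the paths between the local $s$-$t$ pair $(u,v)$ in $G'$, of size at most $x$, contradicting the hypothesis. I expect no serious obstacle here: the main thing to handle carefully is verifying that the lifted paths are simple, but this is immediate from the vertex-disjointness conditions built into the definition of a local $s$-$t$ pair, and from the fact that $Q_i \subseteq V(G')$ while $P_{su}, P_{vt}$ touch $V(G')$ only at $u$ and $v$ respectively.
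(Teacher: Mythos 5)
The paper does not prove this lemma; it simply restates it as a recalled fact from~\cite{tr-j}. Your contrapositive argument --- lift any two distinct $u$-$v$ paths in $G'$ to $s$-$t$ paths by prepending $P_{su}$ and appending $P_{vt}$, observe the lifts are simple by the disjointness conditions in the definition of a local $s$-$t$ pair, and conclude that the restriction $T\cap V(G')$ must already distinguish them --- is correct, complete, and is the standard proof of this lemma (essentially the one in the cited source).
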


Note that a subgraph can have more than one \lstpairs.
Now we can analyze subgraphs in $G$ and identify trackers with respect to the \lstpairs in that subgraph.
Next we give two rule that help mark some vertices as trackers, and thus reduce the number of unmarked vertices in a graph.

\begin{Reduction Rule}
\label{red:triangle-mark}
If $abc$ is a triangle in $G$ such that $a,b$ is a \lstpair for the triangle $abc$ and $c\notin S$, then mark $c$ as a tracker. While considering the disjoint version, if $c\in S\setminus T'$, then reject the current guess for $T'$.
\end{Reduction Rule}

\begin{lemma}
\label{lemma:red-triangle-mark}
Reduction Rule~\ref{red:triangle-mark} is safe and can be applied in polynomial time.
\end{lemma}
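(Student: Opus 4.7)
My plan is to exhibit two simple $s$-$t$ paths whose vertex sequences differ only by the inclusion of $c$, which forces $c$ into every tracking set. Unpacking the definition of \lstpair, since $a,b$ is a local source-destination pair for the triangle $abc$, there exist vertex-disjoint paths $P_{sa}$ from $s$ to $a$ and $P_{bt}$ from $b$ to $t$ meeting the triangle only at $a$ and $b$ respectively; in particular, $c$ lies on neither $P_{sa}$ nor $P_{bt}$. Concatenating these external paths with the two sub-paths of the triangle between $a$ and $b$ yields two $s$-$t$ paths $P_1$ and $P_2$ of the form $s\leadsto a \to b \leadsto t$ and $s\leadsto a \to c \to b \leadsto t$ respectively, both of which are simple by the disjointness conditions in the \lstpair definition.

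The vertex sequences $\overrightarrow{V}(P_1)$ and $\overrightarrow{V}(P_2)$ coincide except that $P_2$ inserts $c$ between $a$ and $b$, so the only vertex capable of distinguishing them is $c$ itself. Hence every tracking set of $G$ must contain $c$, proving safety of marking $c$. For the disjoint version, if $c\in S\setminus T'$, the current guess forbids $c$ from the tracking set while the same argument shows $c$ is mandatory, so rejecting that guess is sound.

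For the polynomial-time claim, I would iterate over all $O(n^3)$ vertex triples, test triangle-ness in $O(1)$ each, and for each of the three orderings of its vertices invoke the disjoint-paths routine cited just before Lemma~\ref{lemma:local-s-t} to decide whether the chosen pair forms a \lstpair for the triangle; each such check runs in polynomial time, so the total cost is polynomial. The only obstacle worth naming is the simplicity of the two constructed paths, which reduces to invoking conditions (3) and (4) in the local source-destination pair definition to ensure $P_{sa}$ and $P_{bt}$ avoid the triangle except at $a$ and $b$ and are vertex-disjoint from each other; once that is in hand, both safety and the polynomial bound follow.
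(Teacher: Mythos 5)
Your proposal is correct and follows essentially the same argument as the paper: both constructions build $P_1=P_{sa}\cdot(a,b)\cdot P_{bt}$ and $P_2=P_{sa}\cdot(a,c)\cdot(c,b)\cdot P_{bt}$, note they differ only at $c$, conclude $c$ is forced, handle the disjoint-guess rejection identically, and bound the running time by iterating over $O(n^3)$ triples with a polynomial-time \lstpair check. No meaningful deviation from the paper's proof.
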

\begin{proof}
Observe that if $a,b$ form a \lstpair, then there exists a path from $s$ to $a$, say $P_{sa}$, that intersects with $abc$ only at $a$, and there exists a path from $b$ to $t$, say $P_{bt}$, that intersects with $abc$ only at $b$. Now consider the paths $P_1=P_{sa}\cdot (a,b) \cdot P_{bt}$ and $P_2=P_{sa}\cdot (a,c)\cdot (c,b)\cdot P_{bt}$. Here $c$ is the only vertex distinguishing between the paths $P_1$ and $P_2$, hence $c$ must be marked as a tracker. Since the disjoint tracking set problem requires to find a tracking set disjoint from $S-T'$, if $c$ is found to belong to $S-T'$, then the current guess for $T'$ is not correct and needs to be rejected. For applying the Reduction Rule, we consider all set of vertices of size three in $G$ in $\mathcal{O}(n^3)$ time, and we check if they form a triangle. For each triangle $abc$, we check if a pair of vertices among $a,b,c$ forms a \lstpair in $\mathcal{O}(n^2)$ time, then we mark the third vertex as a tracker if it satisfies our conditions. The total time taken is clearly polynomial in the size of graph $G$.
\qed
\end{proof}


\begin{Reduction Rule}\label{red:clique}
If there exists a subgraph $G'\subseteq G$, such that $G'$ is a clique, with $a,b\in V(G')$ as a \lstpair for $G'$, then all vertices in $V(G')\setminus\{a,b\}$ need to be marked as trackers. Further, if $a,b$ is the only \lstpair for $G'$, then delete all vertices in $V(G')\setminus\{a,b\}$.
\end{Reduction Rule}

\begin{lemma}
\label{lemma:red-clique}
Reduction Rule~\ref{red:clique} is safe and can be applied in polynomial time.
\end{lemma}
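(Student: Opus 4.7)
The plan is to split the proof into the two parts of the rule---marking and deletion---and to conclude with an observation on running time.

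For the marking part, I would simply invoke Reduction Rule~\ref{red:triangle-mark} on each triangle inside the clique. Fix any $c \in V(G') \setminus \{a, b\}$. Since $G'$ is a clique, the vertices $a, b, c$ induce a triangle $\triangle$. The disjoint certificate paths $P_{sa}, P_{bt}$ witnessing that $(a, b)$ is a \lstpair for $G'$ already meet $V(G') \supseteq V(\triangle)$ only at $a$ and $b$, so $(a, b)$ is also a \lstpair for $\triangle$. Reduction Rule~\ref{red:triangle-mark} therefore marks $c$ as a tracker (or rejects the current guess for $T'$ if $c \in S \setminus T'$).

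For the deletion part, let $M = V(G') \setminus \{a, b\}$ and $G^{*} = G - M$; note that the edge $(a, b)$ survives in $G^{*}$ since $G'$ is a clique. I would prove the biconditional: a set $T \supseteq M$ tracks $G$ if and only if $T \setminus M$ tracks $G^{*}$. The forward direction is immediate, since every \stpath of $G^{*}$ is an \stpath of $G$ avoiding $M$, and only trackers in $T \setminus M$ can distinguish such paths. The backward direction relies crucially on the uniqueness of $(a, b)$ via the structural claim that every \stpath $P$ of $G$ meeting $V(G')$ does so in one contiguous segment that begins at $a$ and ends at $b$. To prove this claim, let $x, y$ denote the first and last vertices of $P$ in $V(G')$. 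If $x \neq y$, the prefix of $P$ up to $x$ and the suffix of $P$ from $y$ witness that $(x, y)$ is a \lstpair of $G'$, forcing $(x, y) = (a, b)$ by uniqueness.

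The tricky subcase is the degenerate $x = y \in M$ (a ``touching'' path) or multiple non-contiguous segments; I would exclude these by splicing the touching subpaths $P_{sc}, P_{ct}$ together with the certificates for $(a, b)$ through a Menger-style rerouting inside the clique, producing a second \lstpair for $G'$ and contradicting uniqueness. Granted the structural claim, each \stpath of $G$ either avoids $M$ entirely and lies in $G^{*}$, or splits as an outside $s$-to-$a$ prefix, an inside $a$-to-$b$ clique traversal (an ordered visit to a subset of $M$), and an outside $b$-to-$t$ suffix. Two distinct such $G$-paths are then either (i) both in $G^{*}$ and distinguished by $T \setminus M$, (ii) one in $G^{*}$ and one not, distinguished by any $M$-tracker appearing in the latter, or (iii) both using $M$: if their inside traversals differ in their $M$-sequence an $M$-tracker distinguishes them, and otherwise their differing outside portions yield distinct \stpaths of $G^{*}$ through edge $(a, b)$, distinguished by $T \setminus M$.

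For the running time, verifying the applicability of the rule on a candidate $G'$ reduces to testing the local-pair status of each pair of vertices in $V(G')$ via the disjoint-paths algorithm of~\cite{KAWARABAYASHI2012424}, which runs in polynomial time, and then performing the marking or deletion. I expect the hardest step to be the Menger-style rerouting used to exclude touching and re-entering paths; it requires carefully exploiting both the clique structure and the uniqueness hypothesis to carve disjoint certificates for a second \lstpair out of the assumed witnesses.
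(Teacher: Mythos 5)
Your marking argument (invoking Reduction Rule~\ref{red:triangle-mark} on each triangle $abc$ inside the clique, reusing the certificate paths for the local $s$-$t$ pair $(a,b)$) is exactly the paper's. For the deletion part, your proof is more detailed than the paper's, which essentially asserts that since all vertices of $M=V(G')\setminus\{a,b\}$ are already marked as trackers, ``any $s$-$t$ path formed due to the deleted vertices must have already been distinguished.'' Your structural claim plus the three-way case split is a reasonable way to make that assertion precise, and your case (iii) --- both paths use $M$ --- is the real content, which the paper's proof glosses over.

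That said, there is a genuine gap which you yourself flag: the structural claim is only argued when the first and last $V(G')$-vertices $x,y$ on the path are distinct (then $(x,y)$ is a local $s$-$t$ pair and uniqueness forces $\{x,y\}=\{a,b\}$). The degenerate cases --- a path meeting $V(G')$ at a single vertex $c\in M$, or a path that exits $G'$ and re-enters --- are not covered, and deferring them to an unspecified ``Menger-style rerouting'' does not close them. In the single-touch case $(c,c)$ is not a valid local pair (the prefix and suffix certificates would share $c$), so uniqueness gives nothing directly, and the witnesses for $(a,b)$ may intersect the touching path so that no second local pair falls out by simple splicing; establishing the claim here requires a concrete argument, not a sketch. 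Likewise a re-entering path gives first/last vertices $a,b$ but not a contiguous internal segment, so your contraction to a $G^*$-path through the edge $(a,b)$ drops outside vertices between the two $G'$-segments, and case (iii) as written does not apply. It is fair to note that the paper's own proof does not engage with these cases either --- it proceeds directly from ``all of $M$ are trackers'' to the conclusion --- so your proposal is a faithful and in fact more careful rendering of the intended argument, but it has a clearly flagged hole precisely where the paper's reasoning is loosest.
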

\begin{proof}
Since $G'$ is a clique, edge $(a,b)\in E(G)$. Observe that all vertices in $V(G')\setminus\{a,b\}$ form triangles with the vertices $a$ and $b$. Due to Reduction Rule~\ref{red:triangle-mark}, all vertices in $V(G')\setminus\{a,b\}$ shall be marked as trackers.
We claim that the rule is safe. Suppose not. Let $G_1$ be the graph obtained from $G$ after deletion of all vertices in $V(G')\setminus\{u,v\}$. Then there exists two \stpaths, say $P_1,P_2$, in $G_1$ that contain the same sequence of trackers, or the deletion removes some untracked \stpaths from $G$. However, note that the deleted vertices can not participate in any \stpaths in $G_1$. Thus the only possibility of removal of some untracked \stpaths from $G$. Observe that all the deleted vertices in $V(G')\setminus\{u,v\}$ had already been marked as trackers. Thus, any \stpath formed due to the deleted vertices must have already been distinguished from other \stpaths in $G$. Hence the rule is safe. There can be at most $n$ components in $G-S$ and each component can have at most $n$ vertices. As mentioned earlier, in polynomial time, we can identify the \lstpairs in a component and mark all the remaining vertices in the component as trackers. If there exists only one \lstpair, in constant time, we can delete all the vertices other than $a$ and $b$ in the subgraph. Thus the rule is applicable in polynomial time.
\qed
\end{proof}


\subsection{Finding a Disjoint Tracking Set}

Let $A= S\setminus T'$ and $B=V \setminus S$. Note that our aim is to find a tracking set that is disjoint from $A$, i.e. $T\subseteq T'\cup B$.  
We first look at some structures induced by vertices in $S$ and $B$ that tries to force some vertices in $B$ as trackers and mark them.
We start with the following reduction rule. 

\begin{Reduction Rule}
\label{red:bad-guess-cycle}
Let $a,b \in T'$. If $a,b$ form a \lstpair for a subgraph $G'\subseteq G[\{a,b\}\cup A]$ (the subgraph induced by $A$ and $a, b$) and $G'$ induces a cycle, 
then reject $T'$, and move to the next guess.
\end{Reduction Rule}

\begin{lemma}
\label{lemma:red-bad-guess-cycle}
Reduction Rule~\ref{red:bad-guess-cycle} is safe and can be implemented in polynomial time.
\end{lemma}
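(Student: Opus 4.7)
The plan is to establish safety by contraposition: assuming the hypothesis of the rule holds, I show that any tracking set $T$ extending $T'$ with $T \cap A = \emptyset$ must fail to distinguish some pair of \stpaths, so that rejecting $T'$ loses no valid solution. The cycle inside $G'$, combined with the fact that $(a,b)$ is a \lstpair for $G'$, will be the key ingredient, since no interior vertex of the cycle can be a tracker.

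First I would construct two indistinguishable \stpaths as follows. Using the two arcs of a cycle in $G'$ passing through $a$ and $b$, I obtain two internally vertex-disjoint $a$-$b$ paths $Q_1, Q_2$ whose interiors lie in $V(G')\setminus\{a,b\} \subseteq A$. I then invoke the definition of \lstpair to get vertex-disjoint paths $P_{sa}$ and $P_{bt}$ in $G$ that meet $V(G')$ only at $a$ and $b$ respectively, and concatenate to form distinct simple \stpaths $R_i = P_{sa} \cdot Q_i \cdot P_{bt}$ for $i=1,2$. For any such extension $T$, the tracker sequences of $R_1$ and $R_2$ must coincide: $P_{sa}$ and $P_{bt}$ contribute identical trackers on either side, the shared interior endpoints are $a$ and $b$, and all interior vertices of $Q_1, Q_2$ lie in $A$ and so are not trackers. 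This yields the required contradiction.

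For the polynomial-time implementation, I would iterate over the $\Oh(k^2)$ pairs $(a,b) \in T' \times T'$. For each pair, I search within $G[\{a,b\}\cup A]$ for two internally vertex-disjoint $a$-$b$ paths (equivalently, a cycle through both $a$ and $b$ in this subgraph) using standard vertex-disjoint $s$-$t$ flow, and then verify the \lstpair condition by running the two disjoint paths algorithm of \cite{KAWARABAYASHI2012424} on $G$ after suppressing the interior vertices of the two paths found. Both subroutines are polynomial, so the total cost is polynomial.

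The main obstacle will be handling the phrase \emph{$G'$ induces a cycle} in full generality: if the cycle lies inside $G'$ but does not pass through both $a$ and $b$, the two-arc argument does not apply directly. I plan to resolve this by restricting attention to a subgraph of $G[\{a,b\}\cup A]$ that contains a cycle through both $a$ and $b$ whenever one exists, and otherwise appealing to the standard observation from \cite{tr-j} that every cycle in a reduced graph must contain a tracker; since $A$ holds no trackers, a cycle confined entirely to $A$ already forces $T'$ to be rejected.
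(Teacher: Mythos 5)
Your proof is correct and takes essentially the same route as the paper's: the paper observes that the cycle in $G'$ yields two $a$-$b$ paths through $A$, and since $A$ carries no trackers these cannot be distinguished, forcing rejection of $T'$; you simply make this explicit by extending the two arcs via $P_{sa}$ and $P_{bt}$ to a pair of indistinguishable $s$-$t$ paths. The ``obstacle'' you flag is not a real one here: since $(a,b)$ is a \lstpair for $G'$ we have $a,b\in V(G')$, and ``$G'$ induces a cycle'' means $G'$ itself is a cycle, so both $a$ and $b$ lie on it; a cycle confined entirely to $A$ is already caught by Reduction Rule~\ref{red:no-cycle-in-X}, exactly as you note.
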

\begin{proof}
Observe that if $a,b$ form a \lstpair for a subgraph $G'$ and $G'$ induces a cycle, then there exist two paths between $a$ and $b$ passing through $A$. Recall that we do not mark any trackers in $A$ as we assume that the tracking set $T$ intersects only with $T'$. Hence, we can not construct a tracking set for this situation, and we reject the current guess for $T'$.
In order to implement the reduction rule, we can run the algorithm for disjoint paths~\cite{KAWARABAYASHI2012424} in quadratic time, as explained before.
\qed
\end{proof}


Since $S$ is a DCM, each vertex in $B$ is adjacent to at least $2$ vertices in $S$. We categorize the vertices in $B$ based on whether their neighbours lie in $A$ or $S-A$ as follows:
\begin{itemize}
\item $V_1$: The set of vertices that have at least two neighbours in $A$.
\item $V_2$: The set of vertices that have at least one neighbour in $A$ and at least one in $S-A$.
\item $V_3$: The set of vertices that have at least two neighbours in $S-A$.
\end{itemize}

Observe that $B=V_1\cup V_2 \cup V_3$. There can be vertices that belong to more than one category, but as we shall see, this does not affect the outcome of the algorithm. Consider the case in which a pair of vertices $u,v\in S$ is adjacent to two vertices $w,x\in B$.  Observe that the vertices $u,v,w,x$ induce a $C_4$, say $C$. Due to Lemma~\ref{lemma:local-s-t}, there exists a \lstpair in the subgraph $C$. Now we analyze each of the above listed vertex sets in $B$, and we consider the possibility of each pair of vertices in a $C_4$ being an \lstpair for that $C_4$. Then we mark all those vertices as trackers that necessarily need to belong to $T$, and we bound the number of unmarked vertices in $V\setminus S$.

\subsubsection{Bounding $V_1-T$}

Consider a set of vertices in $B$ that have two neighbours $u,v$ in $A$, i.e. $S\setminus T'$. See Figure~\ref{fig:c4-one}. Here, $u,v$ cannot be trackers, as $A\cap T=\emptyset$. 

\begin{figure}[ht]
\centering
\includegraphics[scale=0.4]{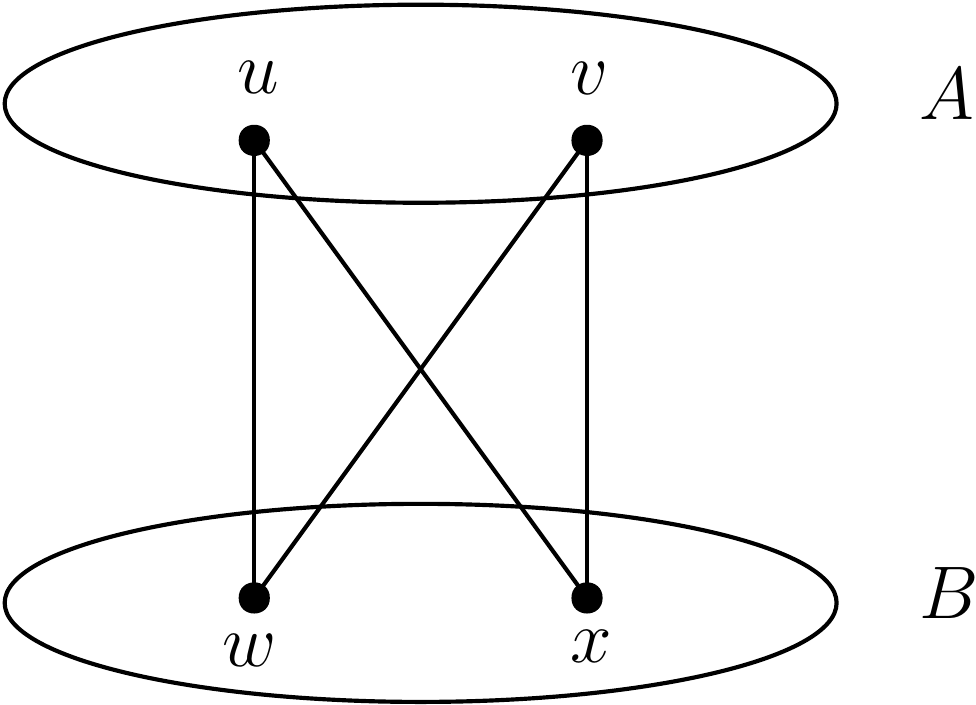} 
\caption{A pair of vertices in $A$ adjacent to two vertices in $B$ forms a $C_4$} 
\label{fig:c4-one}
\end{figure}

\begin{lemma}
\label{lemma:bounding-V1}
The number of vertices in $V_1-T$ can be bounded by $k\choose 2$.
\end{lemma}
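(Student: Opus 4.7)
The plan is to set up an injection $\phi \colon V_1 \setminus T \to \binom{A}{2}$, which immediately yields $|V_1 \setminus T| \leq \binom{|A|}{2} \leq \binom{k}{2}$ since $|A| \leq |S| = k$. For each $w \in V_1 \setminus T$, the defining property of $V_1$ lets me pick an arbitrary pair $\phi(w) = \{u_w, v_w\}$ of neighbours of $w$ in $A$. The entire proof then reduces to showing $\phi$ is injective.

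I would argue injectivity by contradiction: suppose two distinct vertices $w_1, w_2 \in V_1 \setminus T$ share the same image $\{u, v\}$. Then $u w_1 v w_2$ is a $4$-cycle in $G$, and crucially none of $u, v, w_1, w_2$ lies in $T$—the vertices $u, v \in A$ are excluded by the disjointness constraint, and $w_1, w_2$ are chosen outside $T$ by hypothesis. I would then split on the induced subgraph $H$ on $\{u, v, w_1, w_2\}$. If $H$ is an induced $C_4$, Lemma~\ref{lemma:local-s-t} supplies a local $s$-$t$ pair $\{a, b\} \subseteq V(H)$; the cycle then provides two internally vertex-disjoint $a$-$b$ paths, which, concatenated with the $P_{sa}$ and $P_{bt}$ witnesses guaranteed by the local-$s$-$t$-pair definition, yield two distinct $s$-$t$ paths whose symmetric difference lies entirely in $\{u, v, w_1, w_2\} \setminus \{a, b\}$ and hence outside $T$. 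The two paths have identical tracker sequences, contradicting that $T$ is a tracking set.

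If $H$ has a chord, say $uv \in E$, then the triangle $u v w_1$ exists and Reduction Rule~\ref{red:triangle-mark} (assumed applied exhaustively) leaves no escape: either $\{u, v\}$ is the triangle's local $s$-$t$ pair and $w_1$ is marked (contradicting $w_1 \notin T$), or the local $s$-$t$ pair is $\{u, w_1\}$ or $\{v, w_1\}$ and the remaining vertex of $A$ is marked, triggering rejection of the current guess $T'$—again a contradiction to our continued processing. The symmetric chord $w_1 w_2 \in E$ is handled by triangle $u w_1 w_2$ in the same way.

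The main obstacle I anticipate is the induced-$C_4$ case analysis, because the local $s$-$t$ pair may be either of the two opposite pairs $\{u, v\}$ or $\{w_1, w_2\}$, or any of the four adjacent pairs. The unifying observation that makes every sub-case collapse to the same contradiction is that in a $C_4$, every pair of vertices admits two internally vertex-disjoint connecting paths whose internal vertices are drawn from the other two cycle vertices—and those other two are never in $T$. This single structural fact drives the contradiction uniformly across all sub-cases, so once it is isolated, the proof is essentially a bookkeeping exercise over the chord configurations of $H$.
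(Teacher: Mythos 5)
Your proof is correct, and the core mechanism—case analysis on the $C_4$ (or triangle) induced on $\{u,v,w_1,w_2\}$ using Lemma~\ref{lemma:local-s-t} to supply a local $s$-$t$ pair—is the same one the paper uses. What you do differently is the framing and the unification. The paper enumerates the local-$s$-$t$-pair configurations one by one ($\{w,x\}$ opposite, $\{u,w\}$ adjacent, $\{u,v\}$ opposite) and, for each, decides operationally what the algorithm marks or whether it rejects; its $\{u,v\}$ subcase is the most delicate, since it arbitrarily marks one of $w,x$ and must then argue that this arbitrary choice is safe. You instead set up an injection into $\binom{A}{2}$ and prove injectivity by a single contradiction: in a $C_4$, any pair of vertices has two internally vertex-disjoint connecting paths whose internal vertices are the other two cycle vertices, and since all four vertices lie outside $T$ (two by membership in $A$, two by the collision hypothesis), the resulting two $s$-$t$ paths share their tracker sequence. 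This observation collapses all the opposite- and adjacent-pair subcases at once, so you never need the paper's ``arbitrary marking is safe'' argument at all. The chord cases are handled cleanly by appeal to Reduction Rule~\ref{red:triangle-mark}, exactly as the paper does. The upshot: the paper's proof is more operational and tells you what the algorithm should mark, whereas yours is a cleaner structural argument directly about any tracking set $T$ disjoint from $A$; both yield the identical $\binom{k}{2}$ bound.

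One small remark on hygiene rather than correctness: for the injection argument to be airtight you should note (and you essentially do, implicitly) that the contradiction only uses ``$w_1,w_2$ are both adjacent to $u$ and $v$,'' not the specific choice of $\phi$; hence no two members of $V_1\setminus T$ can share \emph{any} common pair of $A$-neighbours, which makes $\phi$ injective regardless of the arbitrary selection.
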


\begin{proof}
We check for the possibility of each pair of vertices in $V(C)$ being a local source-destination pair in the following sequence:
\begin{enumerate}
\item If $\{w,x\}$ form a \lstpair:

Observe that there exists two paths between $w$ and $x$ in $C$, the first one passing through $u$, and the other passing through $v$. However, since both $u,v\in A$, we can not mark them as trackers. Thus in this case we can not find a tracking set for the graph. So we move on to the next guess for $A$.

\item If $u,w$ form a \lstpair: 

If $(u,v)\in E(G)$, then it would have led to rejection of the current guess for $T'$ due to Reduction Rule~\ref{red:triangle-mark}.
Observe that there exists two paths between $u$ and $w$ in $C$: first the edge $(u,w)$, and second the path $u\cdot x\cdot v\cdot w$. Since the path $(u,w)$ does not contain a vertex other than $u$ and $w$, there must be a tracker on the path $u\cdot x\cdot v\cdot w$. Since $v\in A$, we can not mark it as a tracker. Thus, $x$ necessarily has to be marked as a tracker.
Note that this case is symmetric to the cases when $\{u,x\}$, or $\{v,w\}$, or $\{v,x\}$ form a local source-destination pair. In all these cases, we mark the vertex that belongs to $B$ and is not part of the local source-destination pair, as a tracker.

\begin{figure}[ht]
\centering
\includegraphics[scale=0.4]{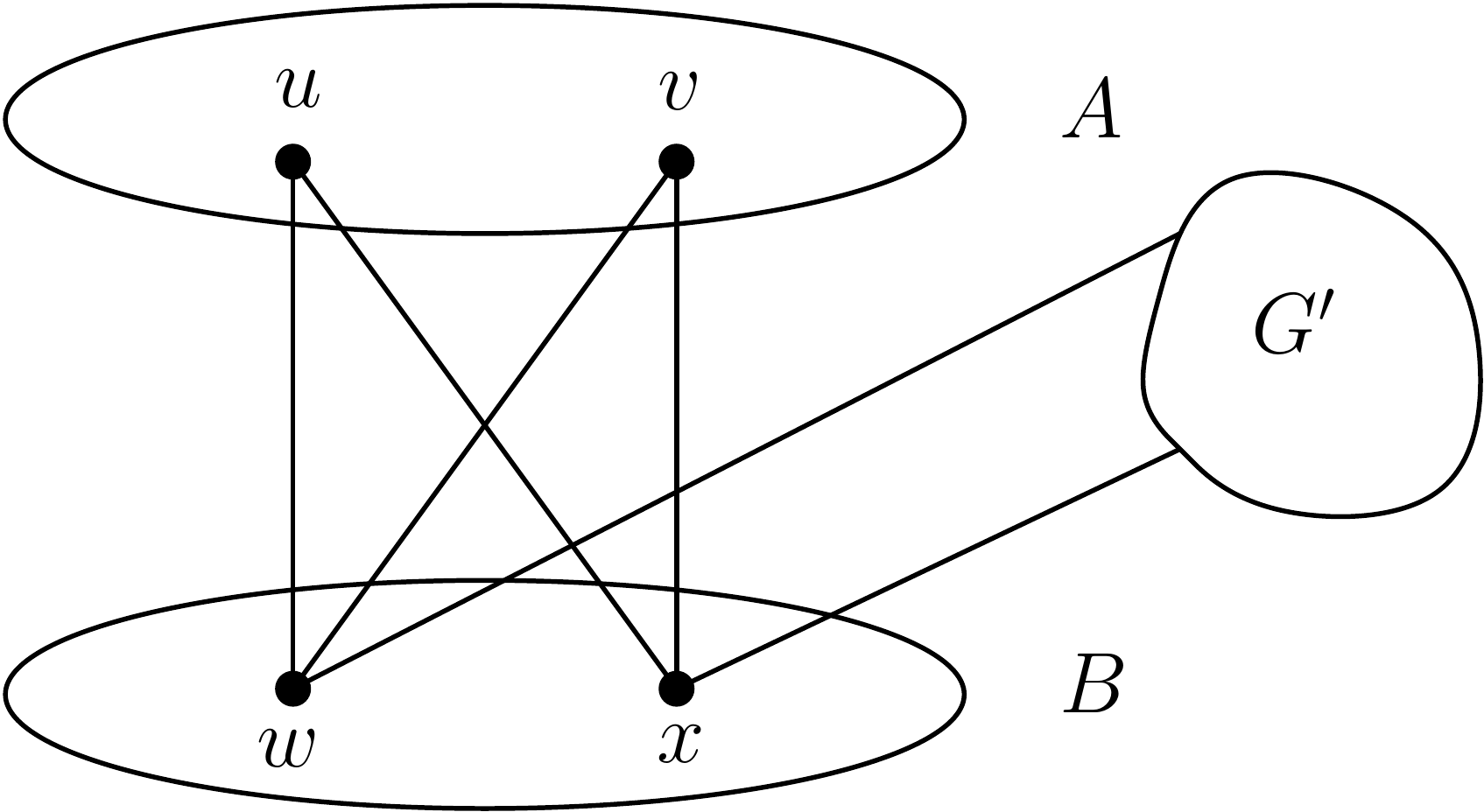} 
\caption{When $u,v$ are a local source-destination pair} 
\label{fig:c4-arb-one}
\end{figure}

\item If $\{u,v\}$ form a \lstpair:

If $(u,v)\in E(G)$, then due to Reduction Rule~\ref{red:triangle-mark} $w,x \in T$.
Else, observe that there exists two paths between $u$ and $v$ in $C$: first one passing through $w$, and the other passing through $x$. If $deg(w)=deg(x)=2$, we arbitrarily mark one of them as a tracker. Else, note that we reach this case when neither $w$ nor $x$ are a local source or destination for $C$. Hence, the degree of both $w$ and $x$ is at least $3$. Suppose not, and let $deg(x)=2$ but $deg(w)\geq 3$. Let $y\in N(w)\setminus\{u,v\}$. Due to Reduction Rule~\ref{red:stpath}, edge $(w,y)$ must participate in an \stpath. Then there must exist a path from $w$ to $t$ that does not include $v$, and hence it is not possible that $w$ is not a local destination. Thus the graph is similar to the one shown in Figure~\ref{fig:c4-arb-one}. Observe that both $\{w,x\}$ and $\{x,w\}$ form \lstpairs for the subgraph $G'$, and hence they do not play a role in tracking the subpaths inside $G'$. Thus we can arbitrarily mark either $w$ or $x$ as a tracker.

\end{enumerate}

After applying the above steps, there does not exist a $C_4$ induced by a pair of vertices in $A$ and a pair of vertices in $B$ such that both the vertices from $B$ are unmarked. Now each pair of vertices in $A$ is adjacent to at most one unmarked vertex in $B$. Since $|A|\leq k$, the number of vertices in $V_1-T$ i.e. unmarked vertices in $V_1$ is at most $k\choose 2$. 
\qed
\end{proof}

\subsubsection{Bounding $V_2-T$}

Here we consider the set of vertices in $B$ that have one neighbour in $S\setminus A$ and one neighbour in $A$. Since $u\in S\setminus A,v\in A$, $u$ is already marked as a tracker and $v$ cannot be a tracker.

\begin{lemma}
\label{lemma:bounding-V2}
The number of vertices in $V_2-T$ can be bounded by $2{k\choose 2}$.
\end{lemma}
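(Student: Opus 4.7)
The approach is to mirror Lemma~\ref{lemma:bounding-V1}, exploiting the fact that for any $w\in V_2$ one of its two fixed neighbors lies in $T'=S\setminus A$ (already a tracker) while the other lies in $A$ (forbidden from $T$). For each unmarked $w\in V_2$ I fix one neighbor $u_w\in T'$ and one $v_w\in A$, and charge $w$ to the unordered pair $\{u_w,v_w\}\subseteq S$. Since there are at most $\binom{k}{2}$ such pairs, it suffices to prove that every pair carries at most two unmarked $V_2$-vertices.

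Suppose $w,x$ are unmarked $V_2$-vertices both charged to the same pair $(u,v)$ with $u\in T'$ and $v\in A$. I may assume $(u,v)\notin E$, since otherwise Reduction Rule~\ref{red:triangle-mark} applied to the triangles $uwv$ and $uxv$ would already have settled the situation; then $\{u,w,v,x\}$ induces a $C_4$. Invoking Lemma~\ref{lemma:local-s-t} for this $C_4$, I case-split on the local $s$-$t$ pair:
\begin{itemize}
\item $\{u,w\}$ (symmetrically $\{u,x\}$): the edge $(u,w)$ and the detour $u\cdot x\cdot v\cdot w$ share trace $(u)$ because $v\in A$ is not a tracker, forcing $x$ (respectively $w$) to be marked.
\item $\{u,v\}$: the paths $u\cdot w\cdot v$ and $u\cdot x\cdot v$ both have trace $(u)$, so one of $w,x$ must be marked.
\item $\{v,w\}$, $\{v,x\}$, or $\{w,x\}$: the alternative routing now passes through the tracker $u$, producing distinguishing traces $(u)$ versus $()$, so no marking is forced.
\end{itemize}
Hence two unmarked $V_2$-vertices on the same pair may coexist, but only when the local $s$-$t$ pair falls into the last three sub-cases.

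To rule out a \emph{third} unmarked vertex on the same pair, suppose $w_1,w_2,w_3$ are all unmarked and all charged to $(u,v)$. Consider $G'=G[\{u,v,w_1,w_2,w_3\}]$; any edge $w_iw_j$ would produce a triangle $uw_iw_j$ on which Reduction Rule~\ref{red:triangle-mark} has already acted, so I may assume $G'$ is the induced $K_{2,3}$. Applying Lemma~\ref{lemma:local-s-t} to $G'$, I enumerate the simple paths between the chosen endpoints with $u$ as the only tracker among the five vertices: in every sub-case (local pair of type $\{u,v\}$, $\{u,w_i\}$, $\{v,w_i\}$, or $\{w_i,w_j\}$) the tracker trace collapses at least two distinct simple paths of $K_{2,3}$, forcing some $w_i$ to be marked and contradicting the assumption. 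This $K_{2,3}$ verification is the main technical obstacle, since each of the four symmetry classes of local $s$-$t$ pair requires its own short path enumeration; once it is in hand, summing over pairs gives $|V_2\setminus T|\le 2\binom{k}{2}$.
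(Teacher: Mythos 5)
Your approach mirrors the paper's: charge each unmarked $V_2$-vertex to an (ordered-type) pair $\{u,v\}$ with $u\in T'$, $v\in A$, and bound the number of unmarked vertices per pair by a case analysis on the local $s$-$t$ pair of the graph induced by the pair plus three common unmarked neighbours. The paper goes straight to the $K_{2,3}$ and shows that in every case at least one of the three gets marked; you additionally do a $C_4$ warm-up to show the bound of two is attainable, which is fine but not needed for the upper bound.

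One step in your argument is not quite sound, and the paper handles it differently. You discharge the cases $(u,v)\in E$ and $w_iw_j\in E$ by asserting that Reduction Rule~\ref{red:triangle-mark} ``has already acted'' on the resulting triangles. That rule only fires for the specific local $s$-$t$ pair that the triangle actually has: when the local pair for $uw_iw_j$ is $\{w_i,w_j\}$, the third vertex $u$ lies in $T'$, so nothing is marked and nothing is rejected; likewise for $uwv$ with local pair $\{v,w\}$. So you cannot conclude that the induced subgraph is a $K_{2,3}$, nor that $(u,v)\notin E$. The paper instead keeps $(u,v)\in E$ as an explicit sub-case inside each of its four local-pair cases (rejecting the guess or marking extra vertices accordingly); you should do the same, and when $w_iw_j\in E$ you should appeal to the clique structure of $G-S$ and Reduction Rule~\ref{red:clique} rather than to the triangle rule. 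These extra edges only add paths between the chosen local $s$-$t$ endpoints, so the marking you need only becomes easier to force; still, the claim that you ``may assume $K_{2,3}$'' needs to be replaced by an actual argument. With that repaired, your path enumerations in the four $K_{2,3}$ sub-cases are correct and match the paper's.
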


\begin{proof}
Here we consider the scenario when a pair of vertices in $S$ is adjacent to three unmarked vertices in $B$.
We call the induced $K_{2,3}$ as $G'$. See Figure~\ref{fig:c4-one-tracker-3b}.

\begin{figure}[ht]
\centering
\includegraphics[scale=0.35]{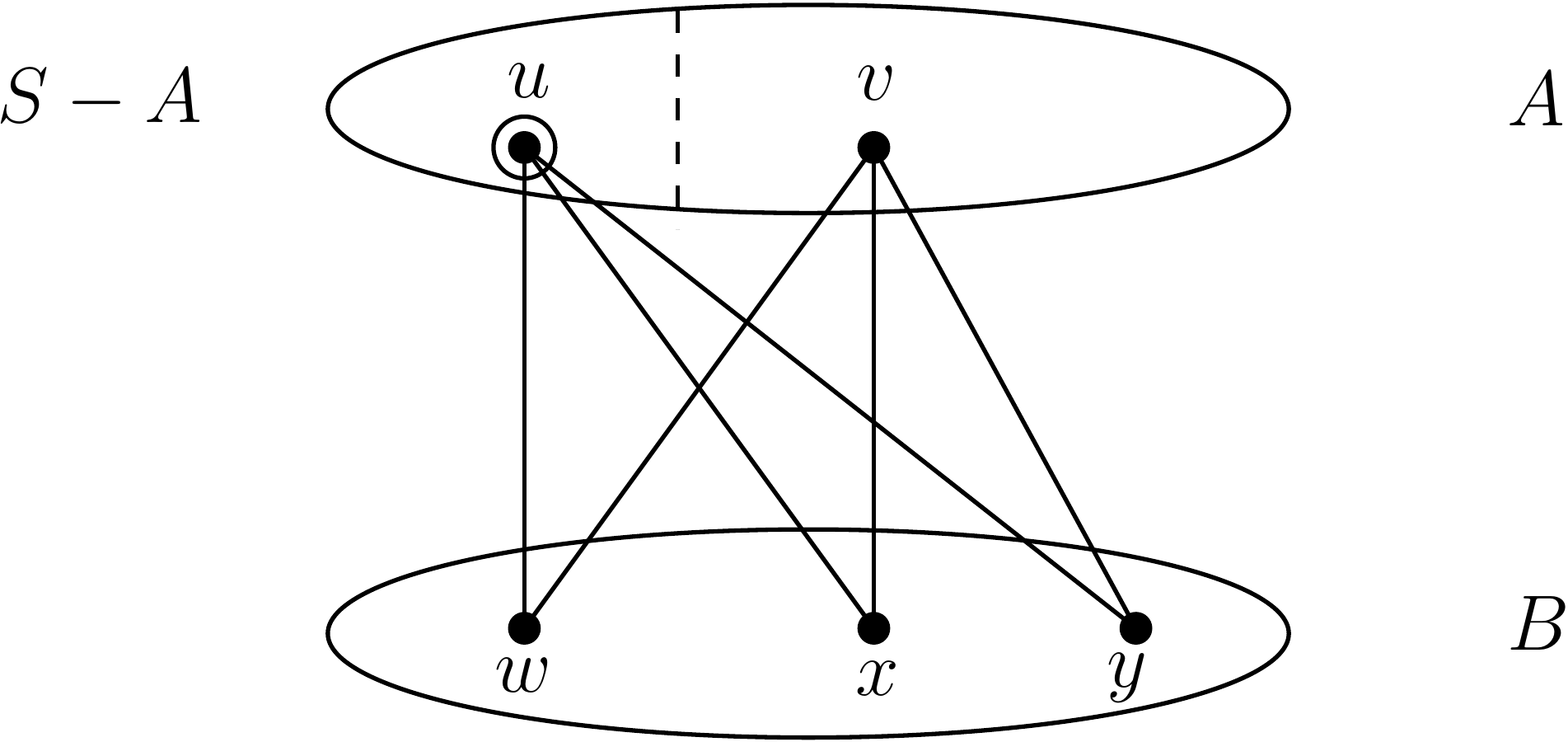} 
\caption{When $u,v$ are a local source-destination pair} 
\label{fig:c4-one-tracker-3b}
\end{figure}

We consider the different cases based on local source-destination for $G'$ in the following sequence.
\begin{enumerate}
\item If $w,x$ form a \lstpair:

Suppose $(u,v)\in E(G)$. Then observe the two paths between $w$ and $x$: first being $w\cdot u\cdot x$ and second being $w\cdot u\cdot v\cdot x$. Note that these paths differ only in the vertex $v$ which can not be marked as a tracker since $v\in A$. Hence, we reject this choice of $A$, and move on to the next choice.

Else $(u,v)\notin E(G)$. Consider the paths $P_1=w\cdot u\cdot x$ and $P_2=w\cdot v\cdot y\cdot u\cdot x$. Either $v$ or $y$ must be marked as a tracker in order to distinguish between $P_1$ and $P_2$. Since $v\in A$, we can not mark it as a tracker. Hence, we mark $y$ as a tracker. Note that this case is similar to the case when $\{w,y\}$ or $\{x,y\}$ form a local source-destination pair. In these cases we mark the vertex in $B\cap G'$ that is not part of the local source-destination pair as a tracker.

\item If $u,w$ form a \lstpair: 

Suppose $(u,v)\in E(G)$. Then observe the two paths between $u$ and $w$: first being the edge $(u,w)$ and second being the path $u\cdot v\cdot w$. Note that these paths differ only in the vertex $v$ which can not be marked as a tracker since $v\in A$. Hence, we reject this choice of $A$, and move on to the next choice.

Else $(u,v)\notin E(G)$. Observe that in $C$, there exists three paths between $u$ and $w$: $P_1=u\cdot w$, $P_2=u\cdot x\cdot v\cdot w$, $P_3=u\cdot y\cdot v\cdot w$.  Since $v$ can not be marked as a tracker, $x,y$ necessarily have to be marked as trackers.
Note that this case is symmetric to the cases when $\{u,x\}$ or $\{u,y\}$ form local source-destination pairs. In these cases, we mark the vertices in $B\cap G'$ that are not part of the local source-destination pair as trackers.

\item If $v,w$ form a \lstpair: 

Consider the paths, $P_1=v\cdot x\cdot u\cdot w$ and $P_2=v\cdot y\cdot u\cdot w$. In order to distinguish between $P_1$ and $P_2$ we need a tracker on at least one of the vertices among $x,y$. Hence, we arbitrarily mark $x$ or $y$ as a tracker. 
Note that this case is symmetric to the case when $\{v,x\}$ or $\{v,y\}$ form local source-destination pairs. In these cases also we arbitrarily mark one vertex as a tracker amongst the ones that are not a local source or destination.

Observe that the only additional possibility is that here in $G'$, $u,v$ might also be forming another local source-destination pair. However, marking arbitrarily one among $x,y$ as a tracker does not violate our analysis for that case as explained in next case.

Note that if $(u,v)\in E(G)$, then in order to distinguish $P_1$, $P_2$, and $P_3=v\cdot u\cdot w$, both $x$ and $y$ necessarily need to be marked as trackers.


\item If $u,v$ form a \lstpair:

Observe that there exists $3$ vertex disjoint paths between $u$ and $v$. 
If $deg(w)=deg(x)=deg(y)=2$, we arbitrarily mark two among $w,x,y$ as trackers. Suppose not.
Then degree of at least two among $w,x,y$ is greater than or equal to three. Without loss of generality suppose $deg(y)=2$ and $deg(w)\geq 3$, $deg(x)\geq 3$. Then the case for $w$ and $x$ is similar to that depicted in Figure~\ref{fig:c4-arb-one}. Hence we arbitrarily mark one among $w,x$, say $w$, as a tracker.

Note that if $(u,v)\in E(G)$, then all three among $x,y,z$ necessarily need to be marked as trackers.

\end{enumerate}

After applying the above steps, each pair of vertices in $S$, such that one of the vertices is in $S\setminus A$ and another is in $A$, is adjacent to at most two unmarked vertices in $B$. Since $|A|\leq k$, the number of vertices in $V_2-T$ i.e. unmarked in $V_2$ is at most $2 {k\choose 2}$. 
\qed
\end{proof}

\subsubsection{Bounding $V_3-T$}

Here we consider the set of vertices in $B$ that have two neighbours in $S\setminus A$. Since $u,v\in S\setminus A$, both $u$ and $v$ are already marked as trackers. Since we have already analyzed the vertices in $B$ that have at least one neighbour in $A$, here we restrict ourselves to only those vertices of $B$ that are adjacent to only $S-A$. Let $G'$ be the subgraph induced by $(S\setminus A)\cup V_3$. 



If a pair of vertices $u,v\in S\setminus A$ are adjacent to a pair of vertices $w,x\in B$, they induce a $C_4$, say $C$. We create an empty set $V_3'$, which will be used to identify those vertices of $V_3$ which might later be needed to be marked as trackers. We check for the possibility of each pair of vertices in $V(C)$ being a \lstpair in the following sequence:
\begin{enumerate}

\item If $w,x$ form a \lstpair: Here $u,v$ already serve as trackers to distinguish the two paths in $C$ between $w$ and $x$. Even if $(u,v)\in E(G)$, we do not need any more trackers.
\item If $u,w$ form a \lstpair: Here $v$ already serves as a tracker to distinguish the paths $u.w$ and $u.x.v.w$. If $(u,v)\in E(G)$, then we mark $x$ as a tracker to distinguish the paths $u.v.w$ and $u.x.v.w$.

\item If $u,v$ form a \lstpair: If $(u,v)\in E(G)$, then we need to mark both $w,x$ as trackers, in order to distinguish the paths $u.v$, $u.w.v$ and $u.x.v$. If $(w,x)\in E(G)$, then also we need to mark both $w,x$ as trackers, in order to distinguish the paths $u.w.v$, $u.x.w.v$ and $u.w.x.v$. If $(u,v),(w,x)\notin E(G)$, we arbitrarily mark one among $w,x$ as a tracker, and add the other vertex to $V_3'$. 
%

\end{enumerate}

We add to $V_3'$ those vertices from $V_3$ that are adjacent to a unique pair of vertices from $T'$.
Thus $|V_3'|\leq {k\choose2}$. 
If $V\setminus S$ (hence $V_3$) is an independent set, for any path in $G'$, at least every alternate vertex is a tracker. Hence, if a pair of paths between a local source and destination in $G'$, has the same sequence of trackers, then the cycle induced by the distinct portions of these paths is of length at most four. Else due to Reduction Rule~\ref{red:clique}, each component in $G-S$ has at most two unmarked vertices. Hence, if $u,v\in T'$, then there exists a path of length at most two between $u$ and $v$, in $G(V_3\cup\{u,v\})$, if all vertices in the path are unmarked. Thus if two such paths exist between a pair of vertices in $T'$ that forms a \lstpair, then we mark a vertex as tracker. In ${k\choose 2}.n^{\Oh(1)}$ time we can find if two such paths exist between a pair of vertices in $T'$ that forms a \lstpair. We arbitrarily mark one vertex on one of the paths as a tracker, and the two vertices from the other subpath are added to $V_3'$. Hence $|V_3'|\leq 2{k\choose 2}$.

Next we prove that we need not consider vertices from $V_3\setminus V_3'$ as trackers.

\begin{lemma}
\label{lemma:V_3-V_3'}
There exists a $k$ sized tracking set for $G$ if and only if there is one that is a subset of $V_1\cup V_2\cup V_3'\cup T'$.
\end{lemma}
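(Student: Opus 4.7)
The forward direction is immediate since any tracking set of size $k$ contained in $V_1 \cup V_2 \cup V_3' \cup T'$ is also a tracking set of size $k$ for $G$. For the backward direction, I would start from an arbitrary tracking set $T$ of size $k$ and construct a tracking set $T^\ast \subseteq V_1 \cup V_2 \cup V_3' \cup T'$ of size at most $k$ by a vertex-for-vertex replacement.

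For each $y \in T \cap (V_3 \setminus V_3')$, the vertex $y$ is adjacent (in the setting of this subsection) only to $S-A = T'$ inside $S$, and by definition of $V_3$ has a pair of neighbours $\{u_y, v_y\} \subseteq T'$. By the selection rule that populates $V_3'$, this pair is represented by some $w_y \in V_3'$ that is itself adjacent to both $u_y$ and $v_y$. I would then set
\[
T^\ast \;=\; \bigl(T \setminus (V_3 \setminus V_3')\bigr) \,\cup\, \{\, w_y : y \in T \cap (V_3 \setminus V_3') \,\}.
\]
By construction $T^\ast \subseteq V_1 \cup V_2 \cup V_3' \cup T'$ and $|T^\ast| \leq |T| = k$.

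The main step is to verify that $T^\ast$ still distinguishes all pairs of s-t paths. I would pick two distinct s-t paths $P_1, P_2$ and, since $T$ is a tracking set, a separator $z \in T$ for them. If $z \in T \cap T^\ast$ there is nothing to prove, so suppose $z = y \in V_3 \setminus V_3'$. The key structural fact is that any occurrence of such a $y$ in a simple s-t path appears as the middle vertex of a length-two subpattern $\ldots u_y\, y\, v_y \ldots$ flanked by the already-marked $T'$-trackers $u_y, v_y$: in the vertex-cover case $V_3$ is independent, so the predecessor and successor of $y$ along any path lie in $S$ and hence in $T'$; in the cluster-deletion case, Reduction Rule~\ref{red:clique} together with the augmentation of $V_3'$ described just before the lemma leaves at most two unmarked vertices in $y$'s component of $G - S$, again reducing the relevant subpattern to a segment of length at most two between two $T'$-trackers. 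Therefore the only information that $y$ contributes to distinguishing $P_1$ from $P_2$ is which route between $u_y$ and $v_y$ is taken by each path; and the $C_4$-case analysis preceding the lemma guarantees that a witness $w_y \in V_3'$ sharing the pair $\{u_y, v_y\}$ with $y$ is already present in $T^\ast$ and plays the identical discriminating role.

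The main obstacle will be making this local-substitution argument watertight in the cluster-deletion case, where an s-t path can traverse $y$ together with another unmarked vertex of the same clique-component. There I would exploit the augmented bound $|V_3'|\leq 2\binom{k}{2}$ (the "companion" vertex of each length-two unmarked $V_3$-subpath between a $T'$-pair was explicitly thrown into $V_3'$ just before the lemma), which supplies the second coordinate needed to mimic the full discriminating pattern of the replaced vertex, so the swap $y \mapsto w_y$ extends to both positions of such a subpath without increasing the overall tracker count.
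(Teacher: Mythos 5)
Your proposal takes a genuinely different route from the paper. The paper argues by contradiction: it assumes some $a\in V_3\setminus V_3'$ is forced into a tracking set because two \stpaths $P_1,P_2$ are otherwise indistinguishable, then examines the induced subgraph between the neighbours $x,y$ of $a$ that bracket $a$ on $P_1$, and shows every way $P_2$ can get from $x$ to $y$ (via another common $B$-neighbour, via $A$ alone, or via $A\cup B$) leads either to $a$ having been placed in $V_3'$, to a rejected guess via Reduction Rule~\ref{red:bad-guess-cycle}, or to a distinguishing tracker already present in $V_1$ or in $T'$. Your version instead performs a vertex-for-vertex swap $y\mapsto w_y$ and asserts that the resulting $T^\ast$ is still a tracking set. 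That is a cleaner, more constructive framing in principle, but as written it has two gaps.

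First, the existence and choice of the representative $w_y\in V_3'$ is not as automatic as claimed. The set $V_3'$ is populated by two mechanisms: vertices uniquely adjacent to a $T'$-pair, and the ``other'' vertex of a $C_4$ (case~3 of the preceding $C_4$ analysis). But that case analysis is carried out only for $C_4$s whose $T'$-pair is detected to be a local source-destination pair, and it may instead \emph{mark} $y$ rather than its companion (in which case $y$ is a forced tracker sitting in $V_3\setminus V_3'$, which already strains the literal reading of the lemma statement). Moreover $y$ may have several neighbours in $T'$ and thus participate in several distinct $T'$-pairs, so a single canonical $\{u_y,v_y\}$ and a single $w_y$ is not well defined a priori; the relevant pair depends on the particular $P_1,P_2$ being separated, not on $y$ alone.

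Second, and more importantly, the claim that $w_y$ ``plays the identical discriminating role'' is false in general and is precisely what needs to be proved. If $P_1$ traverses $u_y\cdot y\cdot v_y$ while $P_2$ gets from $u_y$ to $v_y$ via a route avoiding both $y$ and $w_y$ (through a third common neighbour $z\in V_3$, or through $A$, or through $A\cup B$), then $y\in T$ separates the pair but $w_y\in T^\ast$ lies on neither path and separates nothing. You need to argue that in every such case $T^\ast$ already contains some other separator (a marked $z$, a vertex in $V_1$, or a $T'$-vertex on $P_2$), or that the configuration is impossible after Reduction Rule~\ref{red:bad-guess-cycle}. That is exactly the case analysis the paper performs; without it, your replacement argument does not close. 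The cluster-graph branch has the same shape of gap: the appeal to the doubled bound $|V_3'|\le 2\binom{k}{2}$ explains why the companion vertices are \emph{available}, but not why the swapped set still separates the specific pair $P_1,P_2$ at hand.
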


\begin{proof}
We argue that an optimal tracking set need not contain vertices from $V_3\setminus V_3'$. Suppose the lemma does not hold. Then there exists a vertex $a\in V_3\setminus V_3'$ such that if $a$ is not marked as a tracker, and there exists two \stpaths, say $P_1,P_2$, that contain the same sequence of trackers. Observe that all neighbours of $a$ are marked as trackers. Hence, there must be sub-paths of $P_1,P_2$ that lie between a pair of vertices in $N(a)$, say $x,y$, such that $a$ belongs to one of them and does not belong to another. Without loss of generality, let $P_1'\subseteq P_1$ be the path $x\cdot a\cdot y$ and $P_2'\subseteq P_2$ be a subpath of $P_2$ lying between the vertices $x,y$ such that it does not pass through $a$. Note that $x,y$ act as a local source-destination pair for the graph induced by $V(P_1')\cup V(P_2')$. If $a$ is the unique vertex adjacent to the pair of vertices $x,y$, then $a\in V_3'$. Else, the first possibility is that there exists another vertex, say $b$, such that $b\in B\cap N(x)\cap N(b)$. In this case, vertices $a,b,x,y$ induce a $C_4$ and $x,y$ serve as a local source-destination pair for this $C_4$. Due to the steps applied above, one of $a,b$ must have already been marked as a tracker, and the other one must have been added to $V_3'$. This contradicts the assumption that $a\notin T\cup V_3'$. The second possibility is that $P_2'$ passes through $A$ or $A\cup B$, since all other vertices in $S-A$, i.e. $T'$, are already trackers. Note that due to Reduction Rule~\ref{red:bad-guess-cycle}, this $P_2'$ can not pass through just $A$. Further, if $P_2'$ passes through $A\cup B$, then there exists a pair of vertices, say $u,v\in A$ such that $P_2'$ passes through $u$, then a vertex, say $w$, and then through $v$. Note that in this case if $w$ is not already a tracker, then it is a unique vertex that is adjacent to $u,v$, and hence belongs to $V_1$, and shall be eventually marked as a tracker when we consider all possible guesses for $T\cap (V_1\cup V_2\cup V_3')$.
\qed
\end{proof}

%
%

Now each pair of vertices in $S$ is adjacent to at most one unmarked vertex or a vertex from $V_3'$. Note that $|V_1\cup V_2\cup V_3'|\leq 5{k\choose 2}$.
%
We consider all possible subsets of these $\Oh(k^2)$ vertices, include them with the already marked trackers to form a tracking set, and check the validity of that tracking set using Theorem~\ref{theorem:np}. Thus the overall algorithm takes $2^{\Oh(k^2)}.n^{\Oh(1)}$ time. 
While solving the disjoint problem, for each new guess for $T'$, if the size of the tracking set found is less than that of the tracking set found with respect to the previous guess, we discard the previous disjoint solution, and retain the new one, else we discard the current one. Hence we have the following theorem.

\begin{theorem}
\label{theorem:tp-dcm}
For a graph $G$ with a known \textit{dual connected modulator} of size $k$, \tp can be solved in $2^{\Oh(k^2)}.n^{\Oh(1)}$ time.
\end{theorem}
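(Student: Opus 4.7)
The plan is to combine all the ingredients developed in this section into a single branching algorithm and then bound its running time. First, I would enumerate every subset $T' \subseteq S$ as a candidate for $T \cap S$; since $|S| = k$, there are $2^k$ such guesses. For each guess I set $A = S \setminus T'$, tentatively mark the vertices of $T'$ as trackers, and forbid any vertex of $A$ from ever becoming a tracker. I would then apply the preprocessing rules (Reduction Rules~\ref{red:stpath}--\ref{red:degree-two}) together with the triangle and clique marking rules (Reduction Rules~\ref{red:triangle-mark} and~\ref{red:clique}), and immediately discard the guess whenever Reduction Rule~\ref{red:no-cycle-in-X} or Reduction Rule~\ref{red:bad-guess-cycle} fires, since in those cases no tracking set disjoint from $A$ can exist.

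Next, for each surviving guess I would classify every vertex of $B = V \setminus S$ according to whether its two guaranteed neighbours in $S$ lie entirely in $A$, split between $A$ and $S \setminus A$, or entirely in $S \setminus A$, yielding the sets $V_1$, $V_2$, $V_3$. Walking over every pair $\{u,v\} \subseteq S$ and analysing the induced $C_4$ (or $K_{2,3}$) structures that $\{u,v\}$ forms with pairs or triples of vertices of $B$, I would force additional vertices into the tracker set exactly as in the case analyses leading to Lemmas~\ref{lemma:bounding-V1} and~\ref{lemma:bounding-V2}. These lemmas then bound the unmarked portion of $V_1$ by ${k \choose 2}$ and of $V_2$ by $2{k \choose 2}$. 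For $V_3$ I would not enumerate directly but instead use Lemma~\ref{lemma:V_3-V_3'}, which lets me restrict attention to the auxiliary set $V_3'$ of size at most $2{k \choose 2}$. This leaves at most $5{k \choose 2} = \Oh(k^2)$ candidate vertices from $B$ whose tracker status is still undetermined.

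Having compressed the search space to these $\Oh(k^2)$ candidates on top of the already-marked trackers, I would brute-force every subset of the candidate set, form the union of this subset with the marked trackers and with $T'$, and use the polynomial-time verification algorithm of Theorem~\ref{theorem:np} to check whether the resulting set is a tracking set for $G$. Across all surviving guesses and subsets the algorithm keeps the smallest valid tracking set found. The running time is $2^k$ (for $T'$) times $2^{\Oh(k^2)}$ (for the subset of candidates) times a polynomial factor for the reductions and for verification, which collapses to the claimed $2^{\Oh(k^2)} \cdot n^{\Oh(1)}$.

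The main obstacle, already absorbed into the preceding lemmas, is showing that the unmarked part of $B$ can indeed be squeezed down to $\Oh(k^2)$ vertices while preserving the guarantee that some optimal tracking set disjoint from $A$ is captured by this enumeration; the non-trivial piece is the reduction from $V_3$ to $V_3'$, whose correctness relies on the delicate case analysis inside Lemma~\ref{lemma:V_3-V_3'}. I would also double-check that each reduction rule is applied in a way that does not destroy the dual-connected-modulator property of $S$ — a subtlety that will need to be revisited when this generic algorithm is specialised to vertex cover and cluster vertex deletion, but which plays no role in the bound itself.
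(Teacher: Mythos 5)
Your proposal is correct and follows essentially the same route as the paper: guess $T' = T\cap S$, apply the preprocessing and marking rules, classify $B$ into $V_1,V_2,V_3$, invoke Lemmas~\ref{lemma:bounding-V1}, \ref{lemma:bounding-V2} and \ref{lemma:V_3-V_3'} to reduce the unmarked candidates to $\Oh(k^2)$, brute-force all subsets of these candidates, and verify with Theorem~\ref{theorem:np}, for a total of $2^k\cdot 2^{\Oh(k^2)}\cdot n^{\Oh(1)}=2^{\Oh(k^2)}\cdot n^{\Oh(1)}$ time. One small imprecision: the sets $V_1,V_2,V_3$ are defined via ``at least two neighbours in $A$'', ``at least one in each'', and ``at least two in $S\setminus A$'' (so they need not be disjoint, and vertices may have more than two neighbours in $S$), not as a partition of vertices by where ``the two'' neighbours fall, but this does not affect the argument.
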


\section{Parameterization by Vertex Cover}
\label{sec:vc}

In this section we give an FPT algorithm for \textsc{Tracking Paths} when the parameter $k$ is the size of a known vertex cover for the given graph.

\defproblem{\tp/\textsc{Vertex Cover} $(G,s,t,S,k)$}{An undirected  graph $G=(V,E)$ with terminal vertices $s$ and $t$, and a vertex cover $S\subseteq V(G)$ for $G$, such that $|S|=k$.}
{Find a minimum cardinality tracking set  $T$ for $G$.}
\medskip

We start by applying Reduction Rules~\ref{red:stpath} and \ref{red:no-deg-one}. Observe that these rules delete vertices/edges from the input graph, and hence do not tamper with the vertex cover $S$. Since Reduction Rule~\ref{red:degree-two} introduces a new edge in the graph, we tweak the rule in order to maintain that $S$ is a vertex cover of size at most $k$.

\begin{Reduction Rule}
\label{red:degree-two-vc}
Let $u,v\in V(G)$ such that $deg(u)=deg(v)=2$, $N(v)=\{u,w\}$, then delete $v$ and introduce an edge between $u$ and $w$. If $u,w\notin S$, set $S=S\cup \{u\}$.
\end{Reduction Rule}

Note that if $u,w\notin S$, then $v$ necessarily belongs to $S$. Since deletion of $v$ reduces the size of $S$ by one, we can safely add $u$ (or $w$) to $S$ without increasing the value of $k$.

Now observe that due to Reduction Rules~\ref{red:stpath} and \ref{red:degree-two-vc}, each vertex in $G-S$ has at least two neighbours in $S$. Hence $S$ is a DCM for $G$. Thus we can apply the algorithm for DCM to derive an FPT algorithm for parameterization by the size of a vertex cover for a graph. Hence we have the following theorem.

\begin{theorem}
\label{theorem:tp-vc}
For a graph $G$ with a known vertex cover of size $k$, \tp can be solved in $2^{\Oh(k^2)}.n^{\Oh(1)}$ time.
\end{theorem}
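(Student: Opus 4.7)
The plan is to reduce the problem to \tp/DCM and invoke Theorem~\ref{theorem:tp-dcm}. Starting from the instance $(G,s,t,S,k)$ where $S$ is a vertex cover of size $k$, I would first exhaustively apply Reduction Rules~\ref{red:stpath} and~\ref{red:no-deg-one}, followed by the vertex-cover-aware Reduction Rule~\ref{red:degree-two-vc}. Reduction Rules~\ref{red:stpath} and~\ref{red:no-deg-one} only delete vertices or edges, so they automatically preserve $S$ as a vertex cover of size at most $k$. For Reduction Rule~\ref{red:degree-two-vc}, I would verify, as observed immediately after its statement, that if $u,w\notin S$ then the edges $(u,v)$ and $(v,w)$ force $v\in S$; thus replacing $v$ by the new edge $(u,w)$ decreases $|S|$ by one, and adding $u$ to $S$ restores its size to at most $k$ while covering the newly introduced edge. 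In the remaining cases, either $u\in S$ or $w\in S$ already covers the new edge, and $|S|$ only decreases when $v\in S$ is removed.

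Next I would argue that after these rules are exhausted, $S$ satisfies both defining requirements of a dual connected modulator. First, since $S$ is a vertex cover, $V(G)\setminus S$ is an independent set, hence trivially a disjoint union of (singleton) cliques, so the additional property $\Pi$ in the DCM definition --- which for vertex cover parameterization is just ``disjoint union of cliques'', as noted in Section~\ref{sec:dcm} --- is satisfied. Second, I need every vertex $v\in V(G)\setminus S$ to have at least two neighbours in $S$. Because $V(G)\setminus S$ is independent, every neighbour of $v$ already lies in $S$, so it suffices to show $\deg(v)\geq 2$. Reduction Rule~\ref{red:stpath} removes any vertex not on an \stpath, and in particular every non-terminal degree-$1$ vertex is removed (any \stpath entering it has nowhere to continue). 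Reduction Rule~\ref{red:no-deg-one} ensures neither terminal has degree $1$ (unless the graph is the trivial edge $(s,t)$, which is handled immediately by returning $\emptyset$). Consequently $\deg(v)\geq 2$ and $v$ has at least two neighbours in $S$.

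With $S$ now established as a DCM of size at most $k$ for the reduced graph, I would invoke Theorem~\ref{theorem:tp-dcm} to obtain a minimum tracking set in $2^{\Oh(k^2)}\cdot n^{\Oh(1)}$ time, proving the theorem. The main obstacle in this argument is the degree-two reduction: the original Reduction Rule~\ref{red:degree-two} can destroy the DCM property by suppressing a degree-two vertex in $S$ and creating a new edge whose two endpoints both lie outside $S$, which would both break the vertex cover and potentially leave a vertex in $V\setminus S$ with only one neighbour in $S$. The tweak in Reduction Rule~\ref{red:degree-two-vc}, which selectively promotes one endpoint of the new edge into $S$ only when necessary, is precisely what maintains $|S|\leq k$ together with the vertex cover property; once this is verified, the rest of the argument is a direct appeal to the DCM algorithm.
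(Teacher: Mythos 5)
Your proposal is correct and follows essentially the same route as the paper: apply Reduction Rules~\ref{red:stpath} and~\ref{red:no-deg-one}, swap in the vertex-cover-aware Reduction Rule~\ref{red:degree-two-vc} to preserve $|S|\leq k$, observe that $S$ is then a dual connected modulator (the independence of $V\setminus S$ gives the cluster-graph property $\Pi$ and the post-reduction degree bound gives two neighbours in $S$), and invoke Theorem~\ref{theorem:tp-dcm}. The only cosmetic difference is that you spell out more explicitly why degree-$1$ vertices disappear, which the paper leaves implicit.
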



\section{Parameterization by Cluster Vertex Deletion set}
\label{sec:cvd}

In this section we give an FPT algorithm for \textsc{Tracking Paths} when the parameter $k$ is the size of a cluster vertex deletion set for the given graph.

\defproblem{\tp/\textsc{Cluster Vertex Deletion Set} $(G,s,t,S,k)$}{An undirected  graph $G=(V,E)$ with terminal vertices $s$ and $t$, and a cluster vertex deletion set $S\subseteq V(G)$ for $G$, such that $|S|=k$.}
{Find a minimum cardinality tracking set  $T$ for $G$.}
\medskip

First we apply the Reduction Rules~\ref{red:stpath}, \ref{red:no-deg-one} and \ref{red:degree-two} as explained below.
\begin{itemize}
\item
Reduction Rule~\ref{red:stpath}: Observe that $G-S$ is a cluster graph. Thus for each component in $G-S$, either all the vertices participate in an \stpath, or none of them do. Thus, when we apply Reduction Rule~\ref{red:stpath}, it might lead to deletion of some vertices/edges from $S$, or some components from $G-S$. Note that none of these operation affect the properties of $S$ or $G-S$.

\item
Reduction Rule~\ref{red:no-deg-one}: The application of this rule may delete vertices/edges from $S$ and/or single vertex components from $G-S$. Observe that this does not affect the properties of $S$ or $G-S$.

\item
Reduction Rule~\ref{red:degree-two}: Let $u,v\in V(G)$ be two vertices such that $deg(u)=deg(v)=2$. If both $u,v\in S$, we can apply the rule and its does not affect properties of $S$ or $G-S$. Consider the case in which one vertex among $u,v$ belongs to $S$ while the other belongs to $G-S$. In such a case, we necessarily delete the vertex that belongs to $G-S$. Suppose $u\in S$ and $v\in G-S$. Then $v$ must belong to a component in $G-S$ that comprises of only a single edge. After application of the reduction rule, the component of $v$ shall comprise of only a single vertex. Observe that this does not affect the properties of $S$ and $G-S$.
\end{itemize}

\noindent
We also apply Reduction Rules~\ref{red:triangle-mark} and ~\ref{red:clique} to mark required vertices in $G-S$ as trackers. Note that while applying all above reduction rules, it has been maintained that $G-S$ is a cluster graph and $|S|\leq k$.

Next, we try to mark as many vertices as possible as trackers in $G-S$, such that for the unmarked vertices $S$ is a DCM. We create two sets $X=Y=\emptyset$. We use $X$ to maintain the unmarked vertices in $G-S$, while ensuring that they have two neighbours each in $S$, and we use $Y$ to maintain some other vertices that might need to be marked as trackers.
Now we identify \lstpairs in each component. Due to Lemma~\ref{lemma:local-s-t}, each component (having at least one edge) in $G-S$ has at least one \lstpair. After the application of Reduction Rule~\ref{red:clique}, for each \lstpair in a component, all the remaining vertices shall be marked as trackers.

\begin{corollary}
\label{cor:two-unmarked}
After application of Reduction Rule~\ref{red:clique} at most two vertices in each component of $G-S$ are left unmarked.
\end{corollary}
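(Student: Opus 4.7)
The plan is to argue this directly from the structure of $G-S$ and the definition of Reduction Rule~\ref{red:clique}. Since $G-S$ is a cluster graph, every connected component $C$ of $G-S$ is itself a clique in $G$. I will handle the trivial case of a singleton component first: if $|V(C)|=1$, then $C$ has only one vertex, so at most two vertices are unmarked vacuously.

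The main case is when $C$ contains at least one edge. First I would invoke Lemma~\ref{lemma:local-s-t}, which guarantees that an induced subgraph of a reduced graph containing at least one edge has a local source-destination pair. Applied to the clique $C$, this yields some \lstpair $a,b \in V(C)$. Now I would apply Reduction Rule~\ref{red:clique} to the clique $C$ with this \lstpair: the rule forces every vertex in $V(C)\setminus\{a,b\}$ to be marked as a tracker. Therefore at most the two vertices $a,b$ remain unmarked in $C$.

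The one subtlety worth addressing is that a clique $C$ may admit several distinct \lstpairs. This does not weaken the bound — it only strengthens it. Indeed, for each such pair $\{a_i,b_i\}$, Reduction Rule~\ref{red:clique} marks every vertex in $V(C)\setminus\{a_i,b_i\}$. After exhaustive application, the set of unmarked vertices of $C$ is contained in the intersection $\bigcap_i \{a_i,b_i\}$, which has size at most two. Thus in every case at most two vertices of $C$ remain unmarked after Reduction Rule~\ref{red:clique} is applied to saturation.

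There is essentially no obstacle beyond book-keeping: the statement is an almost immediate consequence of the combination of Lemma~\ref{lemma:local-s-t} with the clique-marking rule. The only thing to be careful about is ensuring that the preprocessing pipeline described just above the corollary (the tweaked applications of Reduction Rules~\ref{red:stpath}, \ref{red:no-deg-one}, \ref{red:degree-two}) indeed leaves $G-S$ as a cluster graph, which has already been argued in the itemized discussion preceding the corollary, so I can simply cite that and then apply Lemma~\ref{lemma:local-s-t} and Reduction Rule~\ref{red:clique} componentwise.
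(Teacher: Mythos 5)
Your argument is correct and follows essentially the same route as the paper: both invoke Lemma~\ref{lemma:local-s-t} to obtain a \lstpair in each non-singleton component (which is a clique since $G-S$ is a cluster graph), then apply Reduction Rule~\ref{red:clique} to mark all other vertices of that clique, leaving at most the two pair members unmarked. Your extra care about singleton components, multiple \lstpairs, and confirming that the preprocessing preserves the cluster structure is sound but amounts to book-keeping the paper leaves implicit.
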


First, we consider the components in $G-S$ that contain $s$ or $t$, or both. Let $G'$ be a component in $G-S$ such that $s\in V(G')$ ($t\in V(G')$). Due to Lemma~\ref{lemma:local-s-t}, $G'$ contains a \lstpair, say $s,a$ ($b,t$). Due to Reduction Rule~\ref{red:clique}, all vertices in $V(G')\setminus\{s,a\}$ ($V(G')\setminus\{b,t\}$) shall be marked as trackers. If $deg_S(a)\geq 2$ ($deg_S(b)\geq 2$), we add $a$ ($b$) to the set $X$, else we add $a$ ($b$) to the set $Y$. Note that $|Y|\leq 2$. Henceforth, by `components' we mean components in $G-S$, and we assume that none of the components in $G-S$ contain $s$ or $t$.
Now we analyze different types of components in $G-S$ based on their sizes.

\subsection{Components with at least three vertices}

Due to Lemma~\ref{lemma:local-s-t}, each component has a \lstpair. Since we already analyzed the components that contain $s$ or $t$, a \lstpair in a component necessarily needs to be adjacent to $S$ in order to connect with rest of the graph. Thus, each component that does not contain $s,t$, has at least two neighbours in $S$. We consider different cases based on the number of vertices in each component of $G-S$ that have neighbours in $S$. 

\subsubsection{Components with exactly two vertices with neighbours in $S$}
\label{subsec:two-vertices-with-nbrs}


\begin{figure}[ht]
\centering
\includegraphics[scale=0.27]{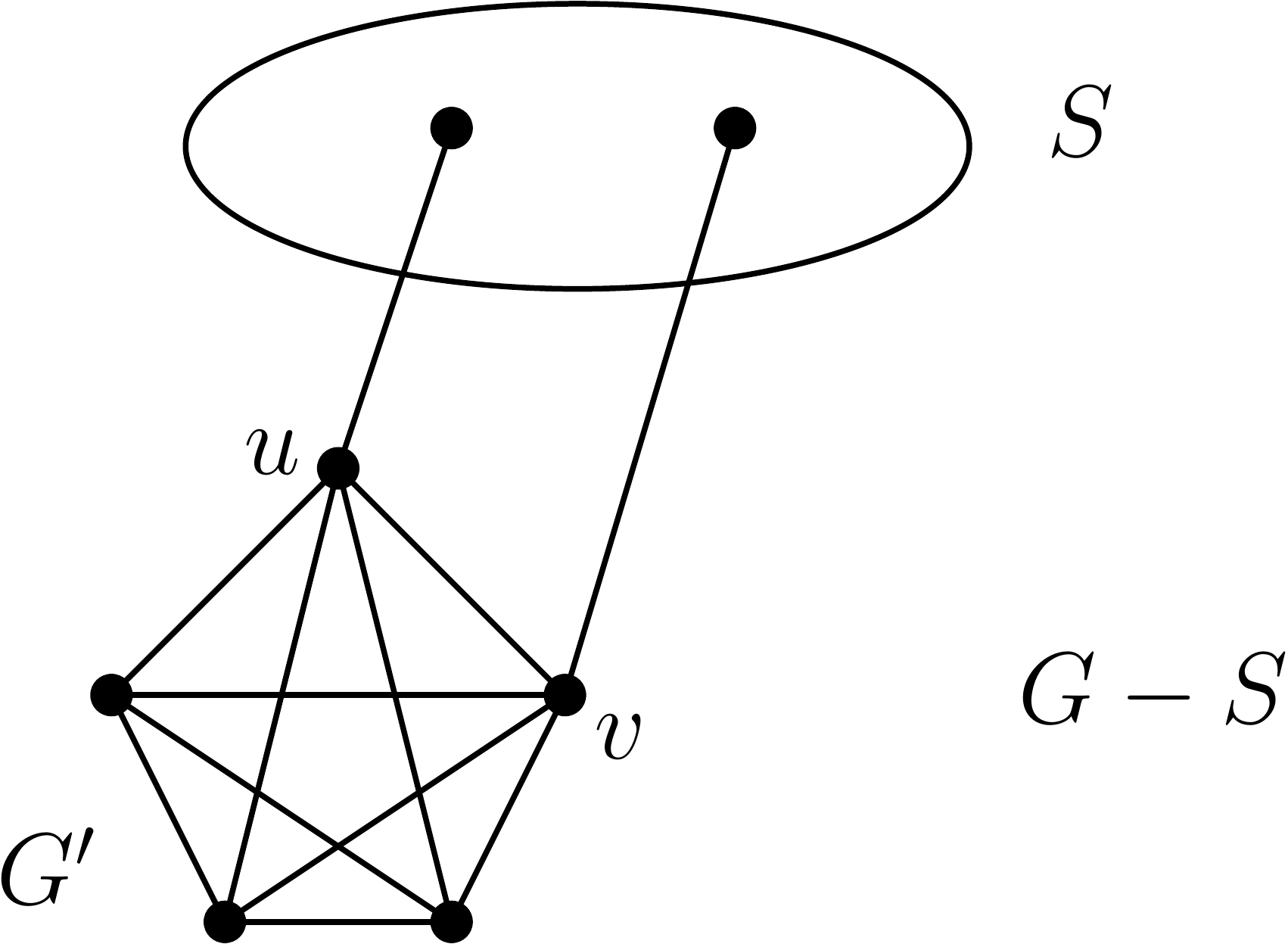} 
\caption{Component in $G-S$ with only two neighbours in $S$} 
\label{fig:two-neighbor}
\end{figure}

Let $G'\subseteq G-S$ be a component with exactly two vertices, say $u,v$, that have neighbours in $S$. See Figure~\ref{fig:two-neighbor}. 
Due to Lemma~\ref{lemma:local-s-t}, $G'$ has at least one \lstpair and since $s,t\notin V(G')$, $u,v$ shall form an \lstpair for $G'$ as these are the only vertices that connect $G'$ with rest of the graph. Due to application of Reduction Rule~\ref{red:clique}, all vertices in $V(G')\setminus\{u,v\}$ shall have been deleted. Thus if a component has only two neighbours in $S$, it can consist of at most two vertices.
The analysis for such components is explained in Section~\ref{subsec:single-vertex-edge-component}.

\subsubsection{Components with three or more vertices with neighbours in $S$}

Let $G'$ be a component in $G-S$. Due to Reduction Rule~\ref{red:clique}, if there are two disjoint \lstpairs, then all vertices in that component shall be marked as trackers. We need not analyze such components further. Henceforth, we assume that if a component in $G-S$ has more than one \lstpairs, then these pairs overlap. Consider a component $G'$ in $G-S$, with two \lstpairs. Since all \lstpairs overlap, at most three vertices in $G'$, say $a,b,c$, form these two \lstpairs.

\begin{figure}[ht]
\centering
\includegraphics[scale=0.37]{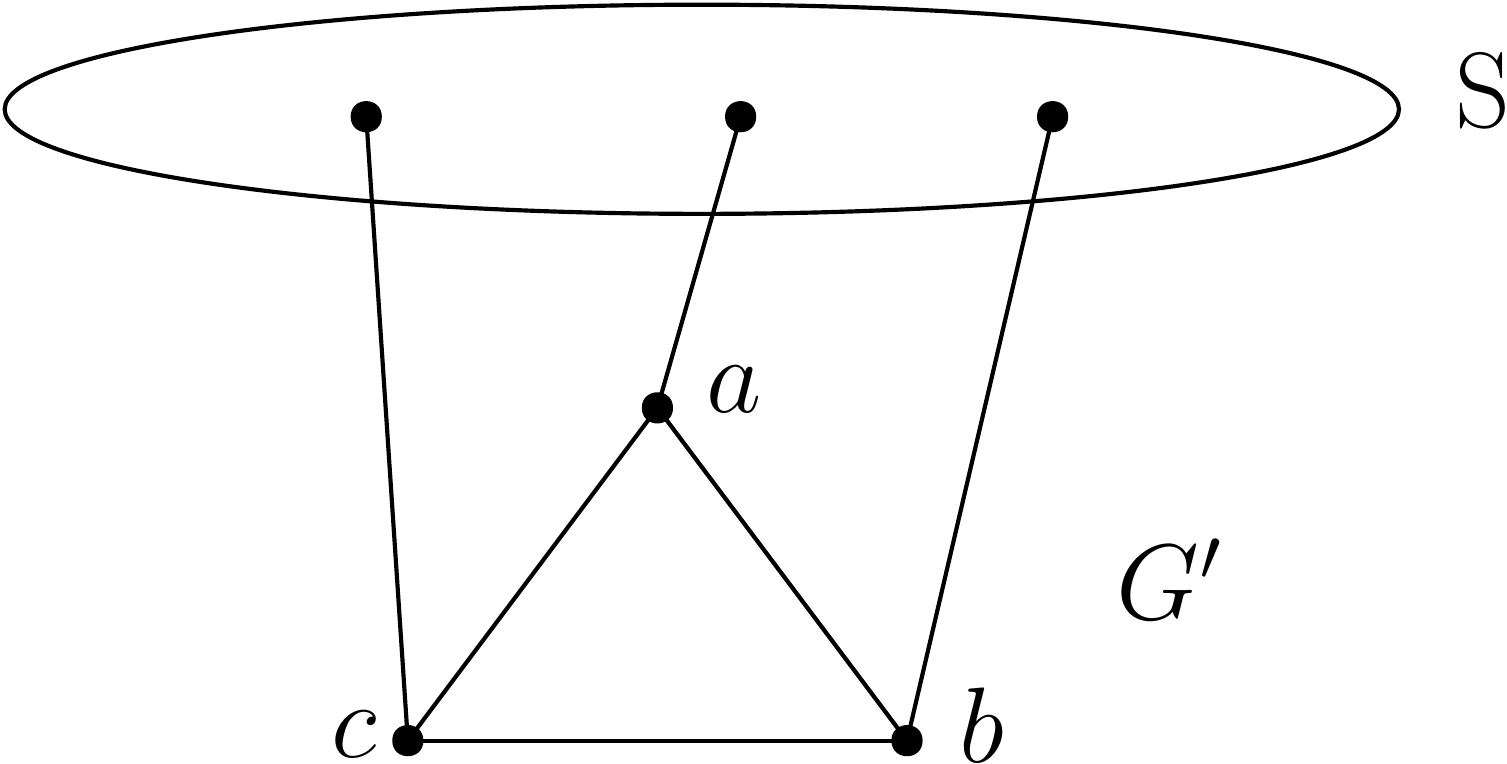} 
\caption{Component in $G-S$ with two \lstpairs} 
\label{fig:two-st-pairs}
\end{figure}

Consider the case in which $a,b$ is a \lstpair and $b,c$ is another \lstpair. See Figure~\ref{fig:two-st-pairs}.
 Observe that both $c$ and $a$ shall be marked as trackers due to Reduction Rule~\ref{red:clique}. 
Since here all \lstpairs in a component overlap, at most one vertex in the component will be left unmarked. If such unmarked vertices have at least two neighbours in $S$, we add them to $X$. Else, we prove that they need not be part of an optimum tracking set.

\begin{lemma}
\label{lemma:single-unmarked}
If there exists a component $G'\subseteq G-S$ with only one unmarked vertex $v$, and $deg_S(v)=1$, then such a vertex need not be marked as a tracker.
\end{lemma}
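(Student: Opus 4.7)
\medskip
\noindent\textbf{Proof proposal.}
The plan is to argue by contradiction: starting from an optimum tracking set $T^{*}$ containing $v$, I would produce an alternative tracking set $T'$ of size at most $|T^{*}|$ that excludes $v$, forcing either a contradiction with the optimality of $T^{*}$ or an optimum solution without $v$.

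First I would record a structural observation that drives the whole argument. Since all \lstpairs of $G'$ overlap and $v$ is the unique unmarked vertex left after Reduction Rule~\ref{red:clique}, every \lstpair of $G'$ must contain $v$. Consequently, any simple \stpath $P$ that visits $G'$ has $v$ as its entry or its exit vertex: the entry and exit of $P$ into $G'$, together with the corresponding subpaths of $P$, witness a \lstpair for $G'$, which by the overlap assumption must contain $v$. Since $v$'s only neighbour outside $G'$ is $u$, such a $P$ must use the edge $(u,v)$. In particular, because $v$ occurs on the simple path $P$ at most once, $P$ interacts with $G'$ in a single contiguous stretch that either begins or ends at $v$.

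The construction is $T' = (T^{*}\setminus\{v\})\cup\{u\}$, so that $u\in T'$, $v\notin T'$, and $|T'|\le|T^{*}|$ (strict if $u\in T^{*}$). I then plan to verify $T'$ is a tracking set by assuming $P_{1}\neq P_{2}$ share a $T'$-sequence and deriving a contradiction through a case analysis. If exactly one of the paths visits $v$, say $P_{1}$, then $P_{1}$ also visits some $w\in V(G')\setminus\{v\}$ which is a marked tracker in $T'$, while $P_{2}$ avoids $G'$ entirely and contributes no $V(G')\setminus\{v\}$ tracker, an immediate mismatch. If both paths visit $v$, then both use $(u,v)$; when both enter at $v$ (or both exit at $v$), matching $T'$-sequences forces the prefix, the in-$G'$ tracker block, and the suffix to align, and then $v$ occupies the same relative position in both $T^{*}$-sequences, so $T^{*}$-tracking yields $P_{1}=P_{2}$. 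When one path enters at $v$ and the other exits at $v$, the tracker $u$ lies on opposite sides of the in-$G'$ tracker block in the two $T'$-sequences, so alignment would force the $s$-to-$u$ prefix of one path to contain the other's in-$G'$ tracker block as a suffix -- impossible, since that prefix avoids $G'$. If neither path visits $v$, the $T'$-sequences coincide with the corresponding $T^{*}$-sequences and $T^{*}$-tracking again gives $P_{1}=P_{2}$.

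I expect the mixed enter-at-$v$ / exit-at-$v$ sub-case to be the main obstacle, and it is precisely the reason $u$ has to be added to $T'$: this is the only configuration in which $T^{*}\setminus\{v\}$ alone could fail to distinguish the two paths, and the marker $u$, constrained to lie adjacent to $v$ on any path that uses it, acts as a positional witness that rigidly ties $v$'s role to a particular side of the in-$G'$ tracker block, exposing the contradiction.
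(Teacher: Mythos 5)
Your construction is exactly the one the paper uses: replace $v$ by its unique $S$-neighbour $u$, i.e.\ pass from $T^{*}$ to $T' = (T^{*}\setminus\{v\})\cup\{u\}$, and argue $T'$ still tracks. The structural observation that $v$ belongs to every \lstpair of $G'$ (because $v$ is the lone survivor of Reduction Rule~\ref{red:clique}) is also correct and useful. Where you diverge from the paper is in how you discharge the critical sub-case. The paper concentrates on the situation where exactly one of $P_1,P_2$ contains $v$, and closes it by observing that the only way the two $T'$-sequences can coincide is via an edge $(u,w)$ with $w\in V(G')$; but then $u,w$ is a \lstpair of the triangle $uvw$, so Reduction Rule~\ref{red:triangle-mark} would already have marked $v$, a contradiction. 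You instead try to dispose of that sub-case by claiming $P_2$ ``avoids $G'$ entirely'' and hence misses a $G'$-tracker that $P_1$ hits.

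That step has a genuine gap. The argument you give --- that if $P_2$ entered $G'$ then its first entry $a$ and last exit $b$ would be a \lstpair forced to contain $v$ --- only works when $a\neq b$. A simple $s$-$t$ path can enter $G'$ at a single vertex $a\in V(G')\setminus\{v\}$ and leave immediately (since $a$, unlike $v$, can have two or more neighbours in $S$), and a one-vertex visit does not witness a \lstpair, so it places no constraint involving $v$. In that situation $P_2$ does visit a tracker of $V(G')\setminus\{v\}$, and if $P_1$'s sole $G'$-trackers coincide with that vertex and the intermediate vertices of $P_2$ between $u$ and that vertex are all non-trackers in $S$, the ``immediate mismatch'' you rely on does not materialize. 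So the ``exactly one path visits $v$'' branch is not closed by your argument as written; you would need either to rule out single-vertex incursions of $P_2$ into $G'$ or to add the paper's triangle/\lstpair contradiction (Reduction Rule~\ref{red:triangle-mark}) to finish that branch. The remaining branches of your analysis (neither visits $v$; both visit $v$ with matching entry/exit orientation; and the mixed orientation case, which your positional argument handles) are in order.
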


\begin{proof}
Let $G'\subseteq G-S$ be a component, with only one marked vertex, say $v$, such that $deg_S(v)=1$. Let $N_S(v)=\{u\}$. Observe that all paths passing through $v$, first traverse through $u$. We claim that $u$ can be replace $v$ as a tracker if needed. If there does not exist an optimum tracking set that contains $v$, then the claim holds. Else, we claim that an optimum tracking set $T$ containing $v$ can be replaced by $T\cup\{u\}\setminus\{v\}$. Suppose not. Then there exists at least two \stpaths in $G$, say $P_1,P_2$, that can not be tracked by replacing $v$ by $u$. Note that $v$ is part of some \lstpair for $G'$, else it would have been marked as a tracker. Without loss of generality let $v$ be a local source for $G'$. All vertices in $V(G')\setminus\{v\}$ and $u$ are already marked as trackers. Thus one among $P_1,P_2$ contains $v$ while the other does not. Let $v\in V(P_1)$ and $v\notin V(P_2)$. Since $P_1,P_2$ have the sequence of trackers, the only possibility here is that there exists and edge between $u$ and another vertex, say $w$, in $G'$. However, in such a case, $u,w$ form a \lstpair for the triangle $uvw$. Then due to application of Reduction Rule~\ref{red:triangle-mark} for vertices in $G-S$, $v$ would have already been marked as a tracker. This contradicts the assumption that $v$ is an unmarked vertex. Note that the other possibility of $v$ serving as a tracker by differentiating between sequence of vertices in two \stpaths is also not applicable here. This is due of the fact that all \stpaths that pass through $u$ and then $v$, also pass through other vertices of $G'$. Hence the vertices of $V(G')\setminus\{v\}$ along with $u$ can help distinguish \stpaths based on vertex sequences.
Thus, while considering all unmarked vertices that might be useful to serve as trackers, we can ignore $v$. 
\qed
\end{proof}

Now we are left with components that have two unmarked vertices. Clearly a pair of unmarked vertices in a component shall be a \lstpair for that component. 
Let $G'$ be a component with only one \lstpair, say $a,b$, but more than two vertices with neighbours in $S$. Consider the following cases:
\begin{enumerate}
\item Both $a$ and $b$ have at least two neighbours in $S$: We add both $a,b$ to $X$.

\item One among $a,b$, say $b$, has two neighbours in $S$, while $a$ has only one neighbour in $S$: We add $b$ to $X$. If $b$ is marked as a tracker while application of the DCM algorithm, due to Lemma~\ref{lemma:single-unmarked}, $a$ need not be considered as a tracker, and hence can be ignored. Else, if $b$ is left unmarked in the DCM algorithm, we can account for $a$ by doubling the bound obtained for unmarked vertices.

\item Both $a$ and $b$ have only one neighbour in $S$: Let $c\in N_S(a)$ and $d\in N_S(b)$. We introduce an additional vertex $v_{ab}$, and introduce edges $(v_{ab},c)$ and $(v_{ab},d)$ to $E(G)$. We also add $v_{ab}$ to $X$. If $v_{ab}$ is eventually part of a solution, then we can arbitrarily mark either $a$ or $b$, say $a$, as a tracker. Not that in order to distinguish paths by their vertex sequences, we can mark $c$ along with $a$, thus ruling out the necessity of marking both $a$ and $b$ as trackers.
\end{enumerate}


\subsection{Single vertex and single edge components}
\label{subsec:single-vertex-edge-component}

If a component in $G-S$ consists of a single vertex $v$, then due to Reduction Rules~\ref{red:stpath} and \ref{red:no-deg-one}, $v$ has at least two neighbours in $S$. We include $v$ in $X$, i.e. $X=X\cup \{v\}$.

If a component consists of a single edge $(a,b)\in E(G-S)$, due to Reduction Rules~\ref{red:stpath} and \ref{red:no-deg-one}, both $a,b$ have a neighbour in $S$. Due to Reduction Rule~\ref{red:degree-two}, it is not possible that $deg(a)=deg(b)=2$. Thus, at least one vertex in each single edge component has at least two neighbours in $S$. If both $a,b$ have two neighbours each in $S$, then we set $X=X\cup\{a,b\}$. 
Suppose $(a,b)$ is an edge component and $b$ has two neighbours in $S$, while $a$ has only one neighbour in $S$.
Observe that if while applying the algorithm for DCM, $b$ is among the $\mathcal{O}(k^2)$ bounded vertices that are left unmarked, then we can simply double this bound in order to account for vertices like $a$ (Note that this does not change the asymptotic bound of the running time). Else, if while applying the algorithm for DCM, if $b$ is marked as a tracker, we show that an optimum tracking set for $G$ need not contain $a$.

\begin{figure}[ht]
\centering
\includegraphics[scale=0.34]{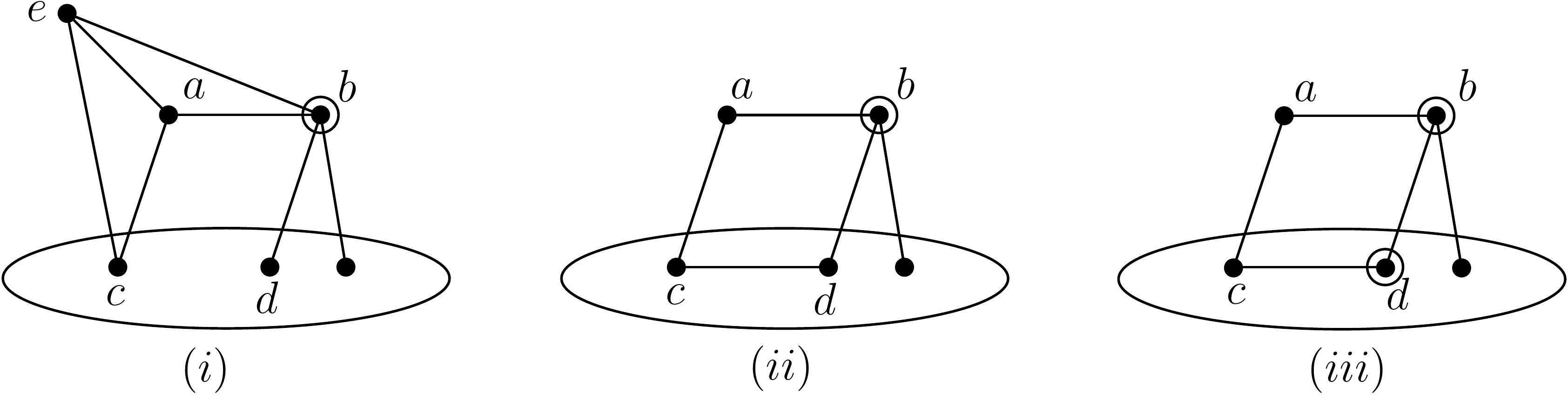} 
\caption{Cases when an edge contains an unmarked vertex} 
\label{fig:edge-cases}
\end{figure}

\begin{lemma}
\label{lemma:edge-cases}
If there exists a component in $G-S$ that comprises of an edge $(a,b)$, and $b$ belongs to an optimum tracking set for $G$, then $a$ need not be marked as a tracker.
\end{lemma}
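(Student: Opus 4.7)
The plan is an exchange argument in the spirit of Lemma~\ref{lemma:single-unmarked}. Structurally, because $(a,b)$ is a component of $G-S$ with $\deg_S(a)=1$, the vertex $a$ has exactly two neighbours, namely $b$ and its unique $S$-neighbour, which I will call $c$; thus every $s$-$t$ path through $a$ traverses the length-two subpath $c$-$a$-$b$, and Reduction Rule~\ref{red:degree-two} forces $\deg(b)\ge 3$.

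Fix an optimum tracking set $T^*$ for $G$; by hypothesis $b\in T^*$. If $a\notin T^*$ there is nothing to prove. Otherwise I would consider $T':=(T^*\setminus\{a\})\cup\{c\}$, which has $|T'|\le |T^*|$, and argue that $T'$ is also a tracking set. Suppose for contradiction that $P_1\ne P_2$ are $s$-$t$ paths agreeing on their $T'$-tracker sequence but differing in vertex sequence. Because $T^*$ distinguishes them and $T^*\triangle T'\subseteq\{a,c\}$, we may assume $a\in V(P_1)$ and $a\notin V(P_2)$. Then $P_1$ contains the subpath $c$-$a$-$b$, so $c$ and $b$ are consecutive in its $T'$-sequence, and consequently in the $T'$-sequence of $P_2$ as well; in particular the $c$-to-$b$ portion of $P_2$ contains no vertex of $T'$.

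The argument now splits on whether $(c,b)\in E(G)$. If $(c,b)\in E(G)$, then $\{a,b,c\}$ is a triangle, and the portions of $P_2$ witness that $s$ reaches $c$ while avoiding $\{a,b\}$ and that $t$ is reached from $b$ while avoiding $\{a,c\}$; hence $(c,b)$ is a \lstpair for this triangle, and Reduction Rule~\ref{red:triangle-mark} (with the third vertex $a\in V\setminus S$) would already have marked $a$, a contradiction. If $(c,b)\notin E(G)$, the $c$-to-$b$ subpath of $P_2$ has at least one internal vertex $w\notin T'$, and one must argue that the structural reason for $b$ being marked by the DCM algorithm forces some such $w$ to already lie in $T'$ (via the $V_1$/$V_2$/$V_3'$ analysis, or via Reduction Rule~\ref{red:bad-guess-cycle} applied to the induced subgraph on the $c$-to-$b$ bypass). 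This second case is the main obstacle: one must verify that every $c$-to-$b$ path avoiding $a$ crosses an already-marked vertex, which should be a direct consequence of the specific \lstpair configuration that forced $b$ to be marked in the DCM step.
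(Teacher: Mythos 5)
Your exchange argument (replace $a$ by its unique $S$-neighbour $c$, and argue that any confusion under $T'$ must reduce to a $c$-to-$b$ bypass of the subpath $c$-$a$-$b$) is precisely the approach the paper takes; the setup, the reduction to an untracked $c$-to-$b$ path, and the observation that such a bypass cannot live entirely in $G-S$ (since $\{a,b\}$ is a whole component there) all line up with the published proof.

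Where you and the paper part ways is in the two sub-cases. Your triangle case ($(c,b)\in E(G)$) is handled cleanly and, in fact, more explicitly than in the paper: you notice that the two halves of $P_2$ witness that $c,b$ is a \lstpair for the triangle $abc$, so Reduction Rule~\ref{red:triangle-mark} would already have marked $a$, contradicting the premise that $a$ is still unmarked when the lemma is invoked. The paper never singles out this direct-edge bypass. Conversely, for the non-triangle case, the paper simply splits on whether all vertices of $V(P_u)\cap S$ are already trackers: if so, $P_u$ carries an internal tracker and is already separated from $c\cdot a\cdot b$; if not, it writes ``we can mark a vertex from $S$ to distinguish $P_u$,'' which is not a step in an exchange argument (it would enlarge the tracking set) and so does not actually close the case. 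Your honest flag that ``one must verify that every $c$-to-$b$ path avoiding $a$ crosses an already-marked vertex'' identifies exactly the step the paper hand-waves past; the paper does not derive this from the $V_1/V_2/V_3'$ machinery or from Reduction Rule~\ref{red:bad-guess-cycle} as you conjecture it should. In short: same skeleton, your case~1 is tighter than the paper's, and the gap you flag in case~2 is a genuine weakness that the paper's own write-up shares rather than resolves.
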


\begin{proof}
Let $c$ be the neighbour of $a$ in $S$. We claim that if $a$ is marked as a tracker, it can be replaced by $c$ as a tracker. If no optimum tracking set contains $a$, then the claim holds. Suppose not. Observe that both $c$ and $b$ are trackers. Then $a$ needs to belong to a tracking set only if there exists one more untracked path, say $P_u$,  between $c$ and $b$. Path $P_u$ can not have vertices only from $G-S$, as this contradicts the assumption that $(a,b)$ is a single edge component in $G-S$. See Figure~\ref{fig:edge-cases}$(i)$. Thus $P_u$ contains vertices from $G$. See Figure~\ref{fig:edge-cases}$(ii)$. We can mark a vertex from $S$ to distinguish $P_u$. Suppose all vertices in $V(P_u)\cap S$ are already marked as trackers. See Figure~\ref{fig:edge-cases}$(iii)$. Observe that in such a case, $P_u$ is already differentiated. Further, since $a$ is adjacent to only $c$ and $b$, $b,c$ can help distinguish the paths passing through $a$ by the sequence of their vertices. Hence, an optimum tracking set for $G$ need not contain $a$.
\qed
\end{proof}

Now all vertices in $G-S$ are either already marked as trackers, or need not be marked as trackers, or have been added to the set $X$. 
We consider all vertices in $X$, and apply the algorithm for DCM. In the final step of the DCM algorithm, where we consider all subsets of unmarked vertices and verify if they form a tracking set (along with the marked ones), we include the vertices in $Y$ along with the unmarked vertices. This does not affect the bounds as $|Y|\leq 2$. Observe that we ignore all other vertices in $G-S$, except the ones in $X$, while analyzing different cases in the DCM algorithm. Hence, we have the following theorem.

\begin{theorem}
\label{theorem:tp-cvd}
For a graph $G$ with a known cluster vertex deletion set of size $k$, \tp can be solved in $2^{\Oh(k^2)}.n^{\Oh(1)}$ time.
\end{theorem}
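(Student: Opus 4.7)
The plan is to reduce the cluster vertex deletion setting to the dual connected modulator framework of Section~\ref{sec:dcm} and invoke Theorem~\ref{theorem:tp-dcm} as a black box. Since $G-S$ is a disjoint union of cliques, the structural property $\Pi$ demanded by a DCM is free; the only real obstacle is that vertices in $G-S$ are not guaranteed to have at least two neighbours in $S$. The entire argument will therefore consist of (i) preprocessing without breaking the cluster structure, (ii) marking enough vertices as trackers to leave at most two unmarked vertices per component of $G-S$, and (iii) pushing the surviving unmarked vertices into a set $X$ such that $S$ is a DCM for $G[S \cup X]$, with a tiny exceptional set $Y$ of size $O(1)$ absorbed into the final brute force.

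First I would revisit the preprocessing. Reduction Rules~\ref{red:stpath} and~\ref{red:no-deg-one} only delete vertices and edges, so they preserve both $|S|\leq k$ and the fact that $G-S$ is a cluster graph. Reduction Rule~\ref{red:degree-two} is the delicate one because it introduces an edge; I would check the three cases (both endpoints in $S$, one in $S$ and one outside, both outside) and argue that whenever the rule fires, the deleted vertex belongs to an isolated-edge component of $G-S$, so contracting it cannot merge two cliques. With this in place Reduction Rules~\ref{red:triangle-mark} and~\ref{red:clique} apply inside each clique of $G-S$, and Corollary~\ref{cor:two-unmarked} leaves at most two unmarked vertices per component.

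Next I would classify the components of $G-S$ and decide for each which of its unmarked vertices must be put into $X$. Components containing $s$ or $t$ get handled specially via Lemma~\ref{lemma:local-s-t}: the unique unmarked vertex either has degree at least two into $S$ (add it to $X$) or not (add to a constant-size set $Y$ with $|Y|\leq 2$). For the other components I split on the number of vertices with neighbours in $S$. If there are only two such vertices, Reduction Rule~\ref{red:clique} collapses the component to a single edge, which I then feed into the single-edge analysis. Single-vertex and single-edge components are processed as in Section~\ref{subsec:single-vertex-edge-component}; Lemma~\ref{lemma:edge-cases} supplies the key exchange argument showing that the low-degree endpoint of a single edge can be ignored once its partner is in $X$. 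Larger components with two unmarked vertices $a,b$ split into three subcases based on how many of $a,b$ have two neighbours in $S$; the hardest subcase, where neither does, is repaired by introducing an auxiliary vertex $v_{ab}$ with the two required edges into $S$ and adding $v_{ab}$ to $X$, so that a winning solution on the modified graph can be translated back by marking one of $a,b$ together with its $S$-neighbour.

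The main obstacle is justifying that vertices of $G-S$ not added to $X$ can be discarded as tracker candidates without losing optimality; this is exactly the content of Lemmas~\ref{lemma:single-unmarked} and~\ref{lemma:edge-cases}, whose replacement arguments hinge on the cluster structure of $G-S$ and on Reduction Rule~\ref{red:triangle-mark} having already fired on any triangle that could separate two $s$-$t$ paths through such a vertex. Once this is in place, $S$ is a DCM (with $\Pi$ being a disjoint union of cliques) for the graph restricted to $S \cup X$, so Theorem~\ref{theorem:tp-dcm} solves the resulting instance in $2^{\Oh(k^2)} \cdot n^{\Oh(1)}$ time. The constant-size set $Y$ is absorbed by enlarging the final subset enumeration in the DCM algorithm by a factor of $2^{|Y|} = O(1)$, which does not affect the asymptotic bound, yielding the claimed running time.
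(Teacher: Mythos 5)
Your proposal follows essentially the same route as the paper: preprocess while preserving the cluster structure of $G-S$, mark vertices via Reduction Rules~\ref{red:triangle-mark} and~\ref{red:clique} to leave at most two unmarked vertices per component, classify components by how many of their vertices see $S$, use Lemmas~\ref{lemma:single-unmarked} and~\ref{lemma:edge-cases} as exchange arguments to discard low-degree survivors, introduce the auxiliary vertex $v_{ab}$ in the doubly-deficient case, and invoke Theorem~\ref{theorem:tp-dcm} on $S$ together with the set $X$ while folding the $O(1)$-sized exception set $Y$ into the final enumeration. The only point where you go slightly beyond the paper is that you explicitly note the case of Reduction Rule~\ref{red:degree-two} with both endpoints outside $S$, which the paper leaves implicit; this is a harmless refinement, not a different argument.
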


\section{Conclusions}

In this paper, we study structural parameterizations of the \tp problem. We prove that \tp is FPT when parameterized by the size of vertex cover or the size of cluster vertex deletion set. We do so by giving a generalized algorithm for the case when the modulator has some specific properties.
It would be interesting to explore if the running time of our algorithms can be improved further.
Future scope involves studying \tp structural parameterization with respect to more parameters like odd cycle traversal, feedback vertex set and distance to chordal graph.

\bibliographystyle{splncs04}
\bibliography{tracking}

\end{document}